\newtheoremstyle{note}
  {\topsep/2}               
  {\topsep/2}               
  {}                      
  {\parindent}            
  {\itshape}              
  {.}                     
  {5pt plus 1pt minus 1pt}
  {}
\theoremstyle{note}
\newtheorem{theorem}{Theorem}
\newtheorem{lemma}{Lemma}
\newtheorem{corollary}{Corollary}
\newtheorem{proposition}{Proposition}
\theoremstyle{definition}
\theoremstyle{remark}
\newtheorem{remark}{Remark}
\newcommand{\mrm}[1]{\mathrm{#1}}
\newcommand{\tr}{\operatorname{tr}}
\newcommand{\diag}{\operatorname{diag}}
\newcommand{\eig}{\operatorname{eig}}
\newcommand{\rep}{\mathrel{\widehat{=}}}
\newcommand{\id}{\mathbbm{1}}
\newcommand{\rme}{\mathrm{e}}
\newcommand{\rmi}{\mathrm{i}}
\newcommand{\rmr}{\mathrm{r}}
\newcommand{\rmA}{\mathrm{A}}
\newcommand{\rmB}{\mathrm{B}}
\newcommand{\rmC}{\mathrm{C}}
\newcommand{\rmD}{\mathrm{D}}
\newcommand{\rmF}{\mathrm{F}}
\newcommand{\rmL}{\mathrm{L}}
\newcommand{\rmT}{\mathrm{T}}
\newcommand{\lr}{{\mathcal{R}_\rmL}}
\newcommand{\caH}{\mathcal{H}}
\newcommand{\caN}{\mathcal{N}}
\newcommand{\caR}{\mathcal{R}}
\newcommand{\be}{\begin{equation}}
\newcommand{\ee}{\end{equation}}
\newcommand{\ba}{\begin{align}}
\newcommand{\ea}{\end{align}}
\def\<{\langle}  
\def\>{\rangle}  
\newcommand{\ket}[1]{| #1\>}
\def\outer#1#2{|#1\>\<#2|}       
\newcommand{\eref}[1]{Eq.~\textup{(\ref{#1})}}
\newcommand{\esref}[1]{Eqs.~\textup{(\ref{#1})}}
\newcommand{\fref}[1]{Fig.~\ref{#1}}
\newcommand{\thref}[1]{Theorem~\ref{#1}}
\newcommand{\Thref}[1]{Theorem~\ref{#1}}
\newcommand{\thsref}[1]{Theorems~\ref{#1}}
\newcommand{\lref}[1]{Lemma~\ref{#1}}
\newcommand{\lsref}[1]{Lemmas~\ref{#1}}
\newcommand{\pref}[1]{Proposition~\ref{#1}}
\newcommand{\crref}[1]{Corollary~\ref{#1}}
\newcommand{\Crref}[1]{Corollary~\ref{#1}}
\newcommand{\cref}[1]{Conjecture~\ref{#1}}
\newcommand{\Cref}[1]{Conjecture~\ref{#1}}
\newcommand{\aref}[1]{Appendix~\ref{#1}}
\newcommand{\rcite}[1]{Ref.~\cite{#1}}
\newcommand{\rscite}[1]{Refs.~\cite{#1}}
\newcommand{\proj}[1]{| #1\rangle\!\langle #1 |}
\begin{document}

\title{Axiomatic and operational connections between the $l_1$-norm of coherence and negativity}

\author{Huangjun Zhu}
\email{Corresponding author: zhuhuangjun@fudan.edu.cn}
\affiliation{Institute for Theoretical Physics, University of Cologne,  Cologne 50937, Germany}

\affiliation{Department of Physics and Center for Field Theory and Particle Physics, Fudan University, Shanghai 200433, China}

\affiliation{Institute for Nanoelectronic Devices and Quantum Computing, Fudan University, Shanghai 200433, China}

\affiliation{State Key Laboratory of Surface Physics, Fudan University, Shanghai 200433, China}

\affiliation{Collaborative Innovation Center of Advanced Microstructures, Nanjing 210093, China}

\author{Masahito Hayashi}
\email{Corresponding author: masahito@math.nagoya-u.ac.jp}
\affiliation{Graduate School of Mathematics, Nagoya University, 
Nagoya, 464-8602, Japan}
\affiliation{Centre for Quantum Technologies, National University of Singapore, 3 Science Drive 2, 117542, Singapore}

\author{Lin Chen}
\email{Corresponding author: linchen@buaa.edu.cn}
\affiliation{School of Mathematics 
and Systems Science, Beihang University, Beijing 100191, China}
\affiliation{International Research Institute for Multidisciplinary Science, Beihang University, Beijing 100191, China}

\begin{abstract}
Quantum coherence plays a central role in various research areas. 
The $l_1$-norm of coherence is one of the most important coherence measures that are easily computable, but it is not easy to find a simple interpretation. 	We show that the $l_1$-norm of coherence is  uniquely characterized by a few simple axioms, which demonstrates
in a precise sense that it is the analog of  negativity in entanglement theory and  sum negativity in the resource theory of magic-state quantum computation. We also provide an operational interpretation of the $l_1$-norm of coherence as the maximum entanglement,  measured by the negativity, produced by  incoherent operations acting on the system and an incoherent ancilla. To achieve this goal, we clarify the relation between the  $l_1$-norm of coherence and negativity for all bipartite states, which leads to an interesting generalization of  maximally correlated states. Surprisingly, all entangled states thus obtained are distillable. Moreover, their entanglement cost and distillable entanglement  can be computed explicitly for a qubit-qudit system. 

\end{abstract}

\date{\today}
\maketitle

\section{Introduction}

Quantum coherence underlies most nonclassical phenomena characteristic of quantum physics, such as entanglement, steering, and Bell nonlocality. It is also a key resource for demonstrating quantum advantages in various information processing tasks, such as quantum metrology, quantum cryptography, and quantum computation. Recently,   coherence  has attracted unprecedented  attention fueled by the resource theoretic  formulation proposed in \rscite{Aber06,BaumCP14} and further developed in \rcite{WintY16} among many other works; see \rcite{StreAP17} for a review.

Although many coherence measures have been proposed, the $l_1$-norm of coherence stands out as one of the most important measures that are easily computable \cite{BaumCP14,StreAP17, HuHPZ17}. It is useful in studying speedup in quantum computation, such as in the
Deutsch-Jozsa algorithm  \cite{DeutJ92,Hill16} and Grover algorithm \cite{Grov96,ShiLWY17, AnanP16}.  
It features prominently in  alternative  formulations of  uncertainty relations, complementarity relations \cite{ChenH15,SingBDP15, YuanBPM17,HuHPZ17},  and wave-particle duality in  multipath interferometers \cite{PrilRM15,BeraQSP15, BagaBCH16,BiswGW17, YuanHZZ18, GaoJHQ17,HuHPZ17}. It plays a crucial role in quantifying the cohering and decohering powers of quantum operations \cite{ManiK15, BuKZW15, GarcEP16}.  In addition, the $l_1$-norm of coherence sets an upper bound for another   important coherence measure,   the robustness of coherence \cite{NapoBCP16,PianCBN16}.

Despite its prominent role in the resource theory of quantum coherence,  the $l_1$-norm of coherence has defied many attempts to find its simple interpretation. Recently, Rana et al. \cite{RanaPL16,RanaPWL17}
proved that the logarithmic $l_1$-norm of coherence is  an upper bound for the relative entropy of coherence, which is equal to the distillable coherence~\cite{WintY16}, and thereby    argued that this measure is the analog of the logarithmic negativity in entanglement theory \cite{VidaW02,AudePE03, Plen05,HoroHHH09}.
While their observation  is quite suggestive, their argument  is not so satisfactory, given that there are many different upper bounds for the relative entropy of coherence. 
It should be pointed out  that the logarithmic negativity is essentially the only useful entanglement measure that is easily computable for all states.
It sets an upper bound for the distillable entanglement  \cite{VidaW02, HoroHHH09} and a lower bound for the asymptotic exact entanglement cost under positive-partial-transpose (PPT) operations \cite{AudePE03}.

In this paper we corroborate the significance of the  $l_1$-norm of coherence by two complementary approaches. 
First, we propose several simple axioms to single out this measure among all competitors. This approach is inspired by a similar characterization of the sum negativity in the resource theory of magic-state quantum computation \cite{BravK05,VeitFGE12,MariE12,VeitMGE14, HowaC17}.
The roles of negativity in quantum computation and foundational studies are a focus of ongoing research \cite{KenfZ04,Gros06, Spek08, Ferr11,HowaWVE14,DelfABR15, PashWB15,Zhu16Q,DeBrF17}.
Our study  demonstrates,
in a precise sense, that the $l_1$-norm of coherence is the analog of  the negativity in entanglement theory and the  sum negativity in the resource theory of magic-state quantum computation, as illustrated in \fref{fig:Resource3}. This analogy is indicative of the deep connections among the three prominent resource theories.

Next, we  provide an operational interpretation of the $l_1$-norm of coherence as the maximum entanglement,  measured by the negativity, created by incoherent operations acting on the system and an incoherent ancilla. 
To this end, we show that the $l_1$-norm of coherence  is an upper bound for  the negativity and determine all states that saturate this bound,
which lead to an interesting generalization of  maximally correlated states~\cite{Rain99}. The asymptotic exact PPT entanglement cost of these states is equal to the logarithmic negativity. Moreover, such states are distillable whenever they are entangled, and
 their entanglement cost and distillable entanglement  can be computed explicitly for a qubit-qudit system.

\section{Preliminaries}

Both entanglement theory and coherence theory are special examples of resource theories \cite{HoroO13, BranG15, CoecFS16, WintY16,StreAP17}, which provide a unified framework for studying resource quantification and manipulations under restricted operations that are deemed free. 
In the case of entanglement, we are restricted to local operations and classical communication (LOCC), so only separable states can be prepared for free. 
In the case of coherence, we are restricted to incoherent operations, so only incoherent states are free. Recall that a state is incoherent if it is diagonal in the reference basis.  A quantum operation 
with Kraus representation $\{K_j\}$ is incoherent if each Kraus operator $K_j$ is incoherent in the sense that it maps every incoherent state to an incoherent state \cite{BaumCP14,WintY16,StreAP17}. 
The operation is strictly incoherent if in addition each $K_j^\dagger$ is  incoherent~\cite{WintY16}. 
For example, a unitary is  incoherent if it is monomial, that is, if each column (row)  has only one nonzero entry, in which case the unitary is also strictly incoherent.

Resource monotones are an essential ingredient of any resource theory.  A
weak monotone  does not increase under nonselective free operations, while a (strong) monotone in addition does not increase on average under  selective free operations.  Convexity is desirable but not  necessary; it is not assumed in this paper. A resource measure is a monotone that is nonnegative and usually vanishes on free states. Although many entanglement measures have been proposed, the negativity is distinguished because of its simplicity and versatility \cite{VidaW02,AudePE03, Plen05,HoroHHH09}.
The $l_1$-norm of coherence in coherence theory \cite{BaumCP14, RanaPWL17} and the sum negativity in the resource theory of magic-state quantum computation \cite{VeitMGE14} are distinguished for the same reason. All three measures have  logarithmic versions, which are additive and satisfy all basic requirements of resource measures, although they are not convex.

Consider a finite-dimensional Hilbert space $\mathcal{H}$ with a reference  orthonormal basis $\{|j\>\}$; the \emph{$l_1$-norm of coherence} \cite{BaumCP14} of a state $\rho=\sum_{jk} \rho_{jk}|j\>\<k|$ on $\mathcal{H}$ reads
\begin{equation}
C_{l_1}(\rho):=\sum_{j\neq k} |\rho_{jk}|=\sum_{j k} |\rho_{jk}|-1.
\end{equation}
It is one of the most important coherence measures that are easily computable \cite{StreAP17, HuHPZ17,Hill16,ShiLWY17, AnanP16,ChenH15,SingBDP15, YuanBPM17,PrilRM15,BeraQSP15, BagaBCH16,BiswGW17, YuanHZZ18, GaoJHQ17,ManiK15, BuKZW15, GarcEP16,NapoBCP16,PianCBN16, RanaPL16,RanaPWL17,HuF16,QiGY17}. In addition,  it sets a tight upper bound for another important coherence measure, namely, 
the robustness of coherence $C_\caR$ \cite{NapoBCP16,PianCBN16}. The two measures $C_{l_1}$ and  $C_\caR$ coincide on pure states and thus share the same convex roof.
The logarithmic $l_1$-norm of coherence $C_{\rmL}(\rho):=\log(1+C_{l_1}(\rho))$  is additive:  $C_\rmL(\rho\otimes \sigma)=C_\rmL(\rho)+C_\rmL(\sigma)$ for arbitrary states $\rho,\sigma$. In this paper log has base 2.
It is a universal  upper bound for two families of R\'enyi relative entropies of coherence \cite{ZhuHC17}, including the relative entropy of coherence $C_\rmr(\rho)$ and 
logarithmic robustness of coherence $C_\lr(\rho):=\log(1+C_\caR(\rho))$ \cite{RanaPWL17}.

Next, consider a bipartite system shared between Alice (A) and Bob (B); we assume that the reference basis  is the tensor product of respective reference bases. The  coherence measures mentioned above are defined in the same way as before. Now the $l_1$-norm of coherence  upper bounds not only the robustness of coherence $C_\caR$ but also  the robustness of entanglement $E_\caR$ \cite{HoroHHH09,ZhuHC17}. Meanwhile, the logarithmic  $l_1$-norm of coherence upper bounds the relative entropy of coherence $C_\rmr$ and logarithmic  robustness of coherence $C_\lr$ as well as the corresponding entanglement measures $E_\rmr$ and  $E_\lr$.

Here we are interested in another useful entanglement measure that has no obvious analog in coherence theory. 
The  \emph{negativity} of  $\rho$ \cite{VidaW02, Plen05,HoroHHH09} is defined as 
\begin{equation}
\mathcal{N}(\rho):=\tr|\rho^{\rmT_\rmA}|-1=\|\rho^{\rmT_\rmA}\|_1-1,
\end{equation}
where $\rho^{\rmT_\rmA}$ denotes the partial transpose of $\rho$ with respect to  subsystem A, and $\|\cdot\|_1$ is the Schatten 1-norm (or trace norm). Note that the definition in some literature differs by a factor of 2. 
For  a pure state $\rho=|\psi\>\<\psi|$ with $|\psi\>=\sum_j \sqrt{\lambda_j} |jj\>$ and $\sum_j\lambda_j=1$, the negativity reads
\begin{equation}
\label{eq:er}
\mathcal{N}(\rho)=C_{l_1}(\rho)=\sum_{j\neq k}\sqrt{\lambda_j\lambda_k}=\biggl(\sum_j \sqrt{\lambda_j}\biggr)^2-1,
\end{equation}
which is equal to the robustness of entanglement, so the two measures share the same convex roof. The logarithmic negativity $
\mathcal{N}_\rmL(\rho):=\log(1+\mathcal{N}(\rho))$
sets an upper bound for the distillable entanglement \cite{VidaW02, Plen05, HoroHHH09} and a lower bound  for 
the asymptotic exact entanglement cost under PPT operations. 
Moreover, the latter bound is tight  when $|\rho^{\rmT_\rmA}|^{\rmT_\rmA}\ge 0$ \cite{AudePE03}.

\section{Axiomatic connection between the $l_1$-norm of coherence and negativity}
\begin{figure}
	\begin{center}
		\includegraphics[width=7cm]{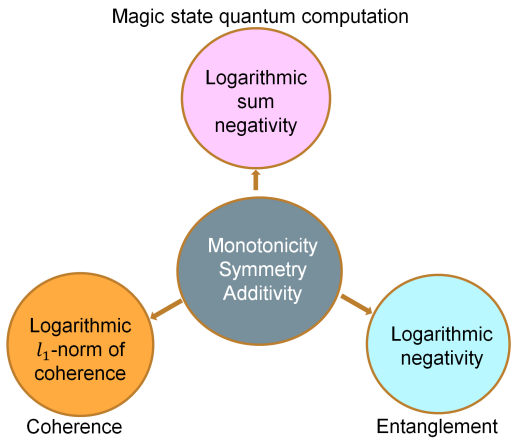}
		\caption{Common axioms that single out the $l_1$-norm of coherence, logarithmic negativity, and logarithmic sum negativity in three prominent resource theories.}
		\label{fig:Resource3}
	\end{center}
\end{figure}

Now we are ready to reveal the axiomatic connections between the $l_1$-norm of coherence and negativity as well as sum negativity; see \fref{fig:Resource3}. 
\begin{theorem}[Uniqueness of  the $l_1$-norm of coherence]\label{thm:l1Unique}
	Suppose $C(\rho)$ is a continuous weak coherence monotone that
	is a symmetric function of  nonzero 
	off-diagonal entries of $\rho$. Then $C(\rho)$ is a nondecreasing function of $C_{l_1}(\rho)$. If in addition it is additive, then $C(\rho)=c\, C_\rmL(\rho)$ for some constant $c\geq0$. 
\end{theorem}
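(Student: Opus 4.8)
Here is how I would approach the proof.

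The plan is to establish, using only elementary incoherent operations and the symmetry hypothesis, that $C$ depends on $\rho$ only through $C_{l_1}(\rho)$, and then to pin down that dependence in the additive case. First, monomial (hence incoherent) unitaries --- phase gates and basis permutations --- together with their inverses are incoherent, so conjugating by them can neither raise nor lower $C$; hence $C$ depends on $\rho$ only through the multiset $\{|\rho_{jk}|:j\neq k,\ \rho_{jk}\neq0\}$ of off-diagonal magnitudes, which always consists of pairs $\{a,a\}$ because $\rho_{kj}=\overline{\rho_{jk}}$. Running the same two-sided argument for appending an incoherent ancilla shows $C$ is also unchanged under embedding into larger spaces. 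I would then collect a small toolkit of incoherent channels that rescale or preserve $C_{l_1}$: \emph{partial dephasing} $\rho\mapsto(1-p)\rho+p\Delta(\rho)$ ($\Delta$ the map to the diagonal; Kraus operators $\sqrt{1-p}\,\id$ and $\sqrt p\,\proj{j}$), which rescales all off-diagonal entries by $1-p$, so $C_{l_1}\mapsto(1-p)C_{l_1}$; \emph{spreading} $\rho\mapsto\bigoplus_ip_i\rho$ into weighted disjoint copies ($\sum_ip_i=1$) and its partial inverse, \emph{merging} disjoint blocks by relabelling one onto another, which after phase alignment preserves $C_{l_1}$ (off-diagonal magnitudes add and the output is a convex mixture of states); and \emph{splitting} a basis level $\ket{k}$ into two levels of weights $\alpha^2,\beta^2$ with $\alpha^2+\beta^2=1$ (Kraus operators $\alpha\,\id$ and $\beta(\sum_{l\neq k}\ketbra{l}{l}+\ketbra{m}{k})$), under which each coherence $\rho_{lk}$ becomes the pair $\alpha^2\rho_{lk},\ \beta^2\rho_{lk}$, again preserving $C_{l_1}$.

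The heart of the argument is showing $C$ is constant on the level sets of $C_{l_1}$. Put $t=C_{l_1}(\rho)$ and write the off-diagonal multiset of $\rho$ as $\{a_i,a_i\}_{i=1}^{k}$ with $\sum_i a_i=t/2$. For $t\le1$ the reasoning is clean: by the symmetry hypothesis $C(\rho)$ equals $C$ of $\bigl(\bigoplus_i(2a_i)\proj{+_2}\bigr)\oplus(1-t)\proj{j}$ (same multiset; admissible since $\sum_i2a_i=t\le1$); merging the qubit blocks turns this into $\varrho(t):=t\proj{+_2}\oplus(1-t)\proj{j}$, so $C(\rho)\ge C(\varrho(t))$, while spreading the qubit part of $\varrho(t)$ into sub-blocks of weights $2a_i/t$ reproduces a state with precisely the multiset of $\rho$, so $C(\varrho(t))\ge C(\rho)$; hence $C(\rho)=C(\varrho(t))=:f(t)$. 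For $t>1$ a single qubit block no longer works --- any direct sum of qubit coherences has $C_{l_1}\le1$ --- so one must use genuinely higher-dimensional canonical states, e.g.\ the uniform states $\tau_d(x)$ with every off-diagonal equal to $x=t/[d(d-1)]$ and $d>t+1$, interconverted with $\rho$ by the splitting and merging moves, and then verify the value obtained is independent of $d$. Making this step go through uniformly over all $\rho$ is the main obstacle, and the place where I expect the continuity hypothesis to be essential (for instance, to reduce to a dense family of states with generic off-diagonal entries). Granted constancy on level sets, $C=f\circ C_{l_1}$; and since from any state with $C_{l_1}=t$ partial dephasing reaches states attaining every value of $C_{l_1}$ in $[0,t]$ without increasing $C$, and every nonnegative value of $C_{l_1}$ occurs, $f$ is nondecreasing --- which is the first assertion.

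For the second assertion, assume in addition that $C$ is additive. Writing $C=f\circ C_{l_1}$ and using $1+C_{l_1}(\rho\otimes\sigma)=(1+C_{l_1}(\rho))(1+C_{l_1}(\sigma))$, with $C_{l_1}(\rho),C_{l_1}(\sigma)$ free in $[0,\infty)$, the function $G(a):=f(a-1)$ on $[1,\infty)$ satisfies $G(ab)=G(a)+G(b)$. It is continuous (taking $\varrho(t)$ to depend continuously on $t$), nondecreasing, and $G(1)=0$, so substituting $a=2^x$ turns this into Cauchy's equation on $[0,\infty)$ and forces $G(a)=c\log a$ for a constant $c$, necessarily $\ge0$; therefore $C(\rho)=c\log(1+C_{l_1}(\rho))=c\,C_\rmL(\rho)$.
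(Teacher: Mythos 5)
Your opening step is where the argument breaks. From invariance under monomial unitaries you conclude that $C$ depends on $\rho$ only through the multiset of off-diagonal magnitudes $\{|\rho_{jk}|\}$. That does not follow: conjugation by a diagonal incoherent unitary $\diag(\rme^{\rmi\theta_0},\dots,\rme^{\rmi\theta_{d-1}})$ shifts the phases only in the constrained pattern $\arg\rho_{jk}\mapsto\arg\rho_{jk}+\theta_j-\theta_k$, so gauge-invariant combinations such as $\arg\rho_{01}+\arg\rho_{12}+\arg\rho_{20}$ (for a state with all three coherences nonzero) cannot be changed, and in general no monomial unitary maps $\rho$ to a state whose entries are the absolute values $|\rho_{jk}|$. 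Worse, the matrix obtained by replacing each entry with its absolute value need not even be positive semidefinite --- the paper exhibits an explicit $4\times 4$ counterexample --- so there is no obvious ``dephased'' target state to aim for. Passing from ``symmetric function of the complex off-diagonal entries'' to ``function of their absolute values'' is exactly the hard content of the paper's \lref{lem:SymUnique}: continuity is used to assume all phases are $K$th roots of unity, the phase distribution is then symmetrized by tensoring with a maximally mixed ancilla and conjugating with a large diagonal incoherent unitary built from the group of root-of-unity diagonal matrices, and finally the symmetry hypothesis is invoked to pass to a specially constructed block state (built from $\ketbra{\psi}{\psi}$ and $\ketbra{\varphi}{\varphi}$ blocks with carefully matched multiplicities) that has the same multiset of complex entries but whose phases can all be removed by a single diagonal incoherent unitary. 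Note that this is where the continuity hypothesis is actually needed, not (as you anticipate) in the level-set step.

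Second, your proof that $C$ is constant on the level sets of $C_{l_1}$ is left open precisely where it matters: you concede that the qubit-block construction fails for $C_{l_1}>1$ and only sketch how higher-dimensional canonical states might be interconverted with $\rho$. The paper's \lref{lem:l1Unique} sidesteps canonical states entirely with a tensoring trick: if $C_{l_1}(\rho)=C_{l_1}(\tilde\rho)$, with magnitude lists $a_0,\dots,a_r$ and $b_0,\dots,b_s$, form the incoherent diagonal states $\sigma\propto\diag(a_0,\dots,a_r)$ and $\tilde\sigma\propto\diag(b_0,\dots,b_s)$; then $\rho\otimes\tilde\sigma$ and $\tilde\rho\otimes\sigma$ have identical multisets of absolute off-diagonal entries, so $C$ agrees on them, and since tensoring with an incoherent state leaves $C$ unchanged one gets $C(\rho)=C(\tilde\rho)$ uniformly in $C_{l_1}$, with no case distinction and no continuity needed at this stage. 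Your final Cauchy-equation treatment of the additive case is sound and essentially parallels the paper's integer--rational--density argument, but it only becomes available once the two gaps above are closed.
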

\begin{theorem}[Uniqueness of  the negativity]\label{thm:NegUnique}
	Suppose $E(\rho)$ is a weak entanglement monotone that is a symmetric function of negative eigenvalues of $\rho^{\rmT_\rmA}$. Then $E(\rho)$ is a nondecreasing function
	of $\mathcal{N}(\rho)$. If in addition it is additive, then $E(\rho)=c\, \mathcal{N}_\rmL(\rho)$ for some constant $c\geq0$. 
\end{theorem}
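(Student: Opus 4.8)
The plan is to show in turn that $E(\rho)$ depends on $\rho$ only through $\mathcal{N}(\rho)$, that this dependence is nondecreasing, and that additivity then fixes its form. Throughout, write $\rho^{\rmT_\rmA}$ (a Hermitian, trace-one operator) via its negative eigenvalues $\mu_1,\dots,\mu_k<0$, counted with multiplicity, so that $\mathcal{N}(\rho)=-2\sum_i\mu_i$; by hypothesis $E(\rho)=f(\mu_1,\dots,\mu_k)$ for a fixed symmetric function $f$, and in particular $E$ takes a single value $E_0=f(\emptyset)$ on all PPT states. The crucial move is to adjoin a classically correlated ancilla: for a probability vector $(q_j)$ put $\tau=\sum_j q_j\proj{jj}$ on $\rmA'\rmB'$, a separable state with $\tau^{\rmT}=\tau\ge0$, so that $(\rho\otimes\tau)^{\rmT_\rmA}=\rho^{\rmT_\rmA}\otimes\tau$ has negative eigenvalues exactly $\{q_j\mu_i\}_{i,j}$. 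Since adjoining a separable ancilla and later discarding it are both free operations, a weak monotone obeys $E(\rho\otimes\tau)=E(\rho)$, so $f$ is unchanged when $\{\mu_i\}$ is refined to $\{q_j\mu_i\}$. Now, given two NPT states $\rho,\sigma$ with $\mathcal{N}(\rho)=\mathcal{N}(\sigma)=N>0$ and negative eigenvalues $\mu_i,\nu_j<0$ respectively, put $a_i:=2|\mu_i|/N$ and $b_j:=2|\nu_j|/N$ (both probability vectors); refining $\rho$'s spectrum by $(b_j)$ and $\sigma$'s by $(a_i)$ produces, in both cases, the multiset $\{-(N/2)\,a_ib_j\}_{i,j}$, whence $E(\rho)=E(\sigma)$ by the symmetry hypothesis. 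Together with $E\equiv E_0$ on PPT states, this shows $E(\rho)=g(\mathcal{N}(\rho))$ for some function $g$.

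To see that $g$ is nondecreasing, note that mixing with a fixed separable state is also free: for any separable $\sigma_{\mathrm{sep}}$, the channel $\rho\mapsto p\rho+(1-p)\sigma_{\mathrm{sep}}$ is LOCC (toss a shared coin, reset to $\sigma_{\mathrm{sep}}$ with probability $1-p$, then forget the coin), so $E\bigl(p\sigma+(1-p)\sigma_{\mathrm{sep}}\bigr)\le E(\sigma)$. The quantity $\mathcal{N}\bigl(p\sigma+(1-p)\sigma_{\mathrm{sep}}\bigr)$ is continuous in $p$ and runs from $0$ at $p=0$ to $\mathcal{N}(\sigma)$ at $p=1$, so by the intermediate value theorem it equals $\mathcal{N}(\rho)$ for a suitable $p$ whenever $\mathcal{N}(\rho)\le\mathcal{N}(\sigma)$; combined with the first paragraph this gives $E(\sigma)\ge E(\rho)$. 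Thus $g$ is nondecreasing, and note that continuity of $E$ is never used.

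Finally, suppose $E$ is additive. Since $\|(\rho\otimes\sigma)^{\rmT_\rmA}\|_1=\|\rho^{\rmT_\rmA}\|_1\,\|\sigma^{\rmT_\rmA}\|_1$, the logarithmic negativity is additive, and, with $\ket{\psi_\lambda}:=\sqrt{\lambda}\ket{00}+\sqrt{1-\lambda}\ket{11}$, the states $\ket{\psi_{1/2}}^{\otimes n}\otimes\ket{\psi_\lambda}$ ($n=0,1,2,\dots$, $\lambda\in[0,1]$) realize every value of $\mathcal{N}_\rmL$ in $[0,\infty)$. Writing $\tilde g(t):=g(2^{t}-1)$, so that $E(\rho)=\tilde g(\mathcal{N}_\rmL(\rho))$, additivity of $E$ and of $\mathcal{N}_\rmL$ gives $\tilde g(s+t)=\tilde g(s)+\tilde g(t)$ for all $s,t\ge0$; a nondecreasing solution of Cauchy's functional equation is linear, and evaluating it at $s=t=0$ forces $\tilde g(0)=0$, so $\tilde g(t)=c\,t$ with $c=\tilde g(1)\ge0$, i.e.\ $E(\rho)=c\,\mathcal{N}_\rmL(\rho)$. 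The one delicate point is the bookkeeping of the first paragraph: confirming that adjoining a separable ancilla, mixing with a separable state, and discarding a subsystem are genuinely free (hence non-increasing for a weak monotone), and that the common refinement makes $\rho$ and $\sigma$, after the respective ancillas are adjoined, indistinguishable to every symmetric function of the negative spectrum. The argument otherwise runs parallel to the proof of \Thref{thm:l1Unique}, with incoherent operations and off-diagonal entries replaced by LOCC and the negative eigenvalues of the partial transpose.
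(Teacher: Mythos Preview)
Your proposal is correct and follows essentially the same route the paper sketches via the analog of \lref{lem:l1Unique}: the refinement trick with a classically correlated separable ancilla to show $E$ depends only on $\mathcal{N}$, then monotonicity, then Cauchy's equation under additivity. The only minor variation is in the monotonicity step, where you mix with a separable state and invoke the intermediate value theorem, whereas the paper's template would tensor $\rho$ with an entangled state $\sigma'$ chosen so that $\mathcal{N}_\rmL(\rho\otimes\sigma')=\mathcal{N}_\rmL(\tilde\rho)$ and then use that partial trace is LOCC; both arguments are valid.
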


Recall that a weak monotone does not increase under nonselective free operations.
A  function is symmetric if it is invariant under arbitrary permutations of its arguments. Here we implicitly assume that 
$C(\rho\otimes \sigma)=C(\rho)$ for any incoherent state $\sigma$, which  can be prepared for free; this assumption is more intuitive when $C$ is additive. Similarly, $E(\rho\otimes \sigma)=E(\rho)$ for any separable  state~$\sigma$. Remarkably, both the  $l_1$-norm of coherence and negativity can be uniquely characterized by such simple axioms. In addition, \thref{thm:l1Unique} still holds if  free operations are strictly incoherent.  
We note that a similar uniqueness theorem has been established  for the sum negativity in the resource theory of magic-state quantum computation~\cite{VeitMGE14}. Our two theorems share the same spirit with Theorem~6 in \rcite{VeitMGE14}. 
The proof of  \thref{thm:l1Unique} is considerably more involved and is relegated to \aref{app:l1UniqueProof};
 \thref{thm:NegUnique} can be proved following this proof (\lref{lem:l1Unique} in \aref{app:l1UniqueProof} in particular). Although the logarithmic sum negativity (referred to as mana in \rcite{VeitMGE14}) was not uniquely characterized  as in \thsref{thm:l1Unique} and \ref{thm:NegUnique} previously, the analogous result follows from a similar argument that underpins \thref{thm:l1Unique}.

\section{A tight inequality between the $l_1$-norm of coherence and negativity}


In this section, we show that the $l_1$-norm of coherence  is an upper bound for  the negativity and determine all states that saturate this bound,
which lead to an interesting generalization of  maximally correlated states~\cite{Rain99}. The structure and entanglement properties of these states are discussed in detail. 

The following theorem  clarifies the relation between $\mathcal{N}(\rho)$ and  $C_{l_1}(\rho)$ for bipartite states; see  \aref{app:Negl1Proof} for a proof.
\begin{theorem}\label{thm:Negl1}
Any bipartite state $\rho$ on $\caH_\rmA\otimes \caH_\rmB$ satisfies $\mathcal{N}(\rho)\leq C_{l_1}(\rho)$. 
The inequality  is saturated iff $\rho^{\rmT_\rmA}$ has the form
\begin{equation}\label{eq:Negl1Equal}
\rho^{\rmT_\rmA}=\sum_{jk} a_{jk} |jk\>\<\pi(jk)|,
\end{equation}
where $\pi$ is a product of disjoint transpositions. We may assume that $\pi(jk)=jk$ whenever $a_{jk}=0$;  each transposition $(jk, j'k')$ satisfies $j\neq j'$ and $k\neq k'$; in addition, $jk'$ and $j'k$ are fixed points of $\pi$ when $(jk, j'k')$ is such a transposition. 
\end{theorem}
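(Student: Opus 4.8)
The plan is to split the claim into the inequality and the characterization of equality. For the inequality $\mathcal{N}(\rho)\le C_{l_1}(\rho)$, I would first observe that $\|\rho^{\rmT_\rmA}\|_1 - 1 = \mathcal{N}(\rho)$ and compare this with $\sum_{jk}|\rho_{jk}| - 1 = C_{l_1}(\rho)$ (here indices run over product-basis labels). Since partial transpose only permutes matrix entries without changing their magnitudes, the sum of absolute values of all entries of $\rho^{\rmT_\rmA}$ equals that of $\rho$, i.e. $\sum |(\rho^{\rmT_\rmA})_{\mu\nu}| = \sum|\rho_{\mu\nu}| = 1 + C_{l_1}(\rho)$. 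So it suffices to show that for any Hermitian matrix $M$ one has $\|M\|_1 \le \sum_{\mu\nu}|M_{\mu\nu}|$. This is standard: writing $M = \sum_\ell \lambda_\ell \ketbra{\psi_\ell}{\psi_\ell}$ in its eigenbasis and using $\|M\|_1 = \sum_\ell|\lambda_\ell|$, one bounds each $|\lambda_\ell|$ via $|M_{\mu\nu}|$ summed with the triangle inequality, or more cleanly invokes the inequality between the trace norm and the entrywise $\ell_1$-norm (the trace norm is the largest "nuclear" norm, dominated by the sum of absolute values of entries, which is itself the $\ell_1$-to-$\ell_\infty$ operator-type bound). Either way this step is routine.

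The substantive part is the equality condition. Equality $\|\rho^{\rmT_\rmA}\|_1 = \sum_{\mu\nu}|(\rho^{\rmT_\rmA})_{\mu\nu}|$ forces $\rho^{\rmT_\rmA}$ to have a very rigid structure. I would track the equality case in the norm inequality above: equality in $\|M\|_1 \le \sum_{\mu\nu}|M_{\mu\nu}|$ holds iff $M$, as a matrix, is (up to signs absorbed into eigenvalues) a "permutation-like" Hermitian matrix — more precisely, $M$ must be a direct sum, after simultaneous row/column relabeling, of $1\times1$ blocks and $2\times2$ blocks of the form $\begin{pmatrix}0 & a\\ \bar a & 0\end{pmatrix}$ together with possibly $1\times1$ blocks $\begin{pmatrix}d\end{pmatrix}$; equivalently, each row and each column of $M$ has at most one nonzero off-diagonal entry and the support pattern is symmetric. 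The cleanest way to see this: if some row $\mu$ had two nonzero entries $M_{\mu\nu_1}, M_{\mu\nu_2}$, then in forming $\|M\|_1$ there is necessarily "cancellation" because the associated $2\times k$ submatrix has trace norm strictly less than its entrywise $\ell_1$-norm unless all but one entry vanishes (a short linear-algebra lemma, which I would state and prove separately). Applying this to $M = \rho^{\rmT_\rmA}$ and translating back through the partial transpose gives exactly the stated form $\rho^{\rmT_\rmA} = \sum_{jk} a_{jk}\ketbra{jk}{\pi(jk)}$ with $\pi$ a product of disjoint transpositions.

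Finally I would verify the three extra conditions on $\pi$. Hermiticity of $\rho^{\rmT_\rmA}$ forces $a_{jk} = \overline{a_{\pi(jk)}}$ and $\pi^2 = \mathrm{id}$, so $\pi$ is an involution, i.e. a product of disjoint transpositions; the normalization $a_{jk}=0 \Rightarrow \pi(jk)=jk$ is just a harmless convention removing spurious fixed-point freedom. The constraints $j\neq j'$ and $k\neq k'$ for a transposition $(jk,j'k')$: if $j=j'$ then $(\rho^{\rmT_\rmA})_{jk,jk'} = \rho_{jk,jk'}$ would be a same-block off-diagonal term that partial transpose leaves in place, and one checks this would force $\rho$ itself to be non-positive or would reintroduce cancellation — I expect the cleanest argument is that when $j=j'$ the $2\times2$ block $\begin{pmatrix}0 & a\\ \bar a & 0\end{pmatrix}$ sits inside $\rho^{\rmT_\rmA}$ in positions that, upon undoing the partial transpose, correspond to positions of $\rho$ violating positivity (a $2\times2$ principal minor $\begin{pmatrix}0&a\\\bar a&0\end{pmatrix}$ has determinant $-|a|^2<0$). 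The claim that $jk'$ and $j'k$ are fixed points follows from the "at most one nonzero off-diagonal per row/column" structure: the entries $(\rho^{\rmT_\rmA})_{jk',\,\cdot}$ and $(\rho^{\rmT_\rmA})_{j'k,\,\cdot}$ already have their single nonzero off-diagonal slot used up (or, if examined carefully, cannot be paired without creating a longer cycle), forcing $\pi(jk')=jk'$ and $\pi(j'k)=j'k$.

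The main obstacle I anticipate is not the inequality but pinning down the equality case cleanly: the "at most one nonzero off-diagonal entry per row and column, with symmetric support" characterization of the equality locus of $\|M\|_1 \le \sum|M_{\mu\nu}|$ needs a careful proof (the naive triangle-inequality argument gives the inequality but obscures the tightness condition), and then correctly bookkeeping how the product-basis index $\mu = jk$ behaves under partial transpose — in particular ensuring the transpositions act "diagonally" in the sense required ($j\neq j'$, $k\neq k'$, with the cross-terms $jk'$, $j'k$ fixed) — is where the combinatorial care lies. I would isolate the norm-equality lemma as a standalone lemma (it may already be what \lref{lem:l1Unique} provides) and then the translation to bipartite index structure becomes bookkeeping.
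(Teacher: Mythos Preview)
Your overall strategy matches the paper's: reduce to $\|M\|_1\le\|M\|_{l_1}$ for $M=\rho^{\rmT_\rmA}$ and analyze equality. The relevant norm lemma is \lref{lem:Negl1bound} (not \lref{lem:l1Unique}); the paper proves it via majorization---with $u=(|M_{jk}|^2)_{jk}$, $v=\diag(M^\dagger M)$, $w=\eig(M^\dagger M)$ one has $u\prec v\prec w$, and strict Schur concavity of $\sum_j\sqrt{w_j}$ yields both the inequality and the equality characterization ``at most one nonzero entry per row and per column.'' Your row-by-row cancellation sketch is a plausible alternative but would need to be fleshed out; the majorization route handles the equality case for free.

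There is, however, a genuine gap in your argument that $jk'$ and $j'k$ are fixed points of $\pi$. This does \emph{not} follow from the one-nonzero-per-row structure of $\rho^{\rmT_\rmA}$: the transposition $(jk,j'k')$ occupies rows $jk$ and $j'k'$ only, leaving rows $jk'$ and $j'k$ completely unconstrained and free to be paired elsewhere. For instance, nothing in the permutation combinatorics alone rules out $\pi$ containing both $(00,11)$ and $(01,22)$, where $jk'=01$ is not fixed. What excludes this is positivity of $\rho$, which you have not invoked at this step. The paper's argument: from $(\rho^{\rmT_\rmA})_{jk,j'k'}=a_{jk}\neq0$ one reads off, via partial transpose, $\langle j'k|\rho|jk'\rangle=a_{jk}\neq0$; then $\rho\ge0$ forces $\langle j'k|\rho|j'k\rangle>0$, hence $(\rho^{\rmT_\rmA})_{j'k,j'k}>0$, so the unique nonzero entry in row $j'k$ of $\rho^{\rmT_\rmA}$ is diagonal and $\pi(j'k)=j'k$. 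Once this is established, $j\neq j'$ and $k\neq k'$ follow immediately (if $j=j'$ then $j'k=jk$ would be simultaneously a fixed point and part of a transposition), so your separate principal-minor argument for those conditions, while correct, becomes unnecessary.
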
 
Recall that  a transposition is a special permutation that interchanges two elements, while leaving others  invariant. 
 A quantum state $\rho$ is a  \emph{pairing state} if it saturates the inequality
$\mathcal{N}(\rho)\leq C_{l_1}(U\rho U^\dag)$ for some local unitary $U=U_\rmA\otimes U_\rmB$. It is a \emph{canonical pairing state} if in addition $U$ is the identity,
so that $\rho^{\rmT_\rmA}$ has the form of \eref{eq:Negl1Equal}. Here the terminology originates from the fact that every transposition is determined by a pair of basis states, and vice versa. \Thref{thm:Negl1} demands that pairing is monogamous; that is, every basis state can be paired with at most one other basis state. Moreover, some basis states must remain as bachelors.

To clarify the structure of pairing states, given a Hermitian operator $M$ acting on $\caH_\rmA\otimes \caH_\rmB$, 
define $\caN_0(M)$ as the number of negative eigenvalues of $M^{\rmT_\rmA}$; define $C_{l_0}(M)$ as the number of nonzero off-diagonal entries. The following proposition is proved in \aref{app:N0l0Proof}.
\begin{proposition}\label{pro:N0l0}
	Any state $\rho$ satisfies $C_{l_0}(\rho)\geq 2\caN_0(\rho)$.
\end{proposition}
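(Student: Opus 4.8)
The plan is to pass to the partial transpose $H:=\rho^{\rmT_\rmA}$ and reduce the claim to a purely linear-algebraic fact: a Hermitian matrix with nonnegative diagonal entries has at most half as many negative eigenvalues as nonzero off-diagonal entries.

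First I would record two elementary observations about $H$ in the product basis $\{|jk\>\}$. Partial transposition acts on matrix units by $\ketbra{jk}{j'k'}\mapsto\ketbra{j'k}{jk'}$, that is, by the involution $(jk,j'k')\mapsto(j'k,jk')$ of matrix positions, which fixes every diagonal position (those with $jk=j'k'$). Consequently it maps the nonzero off-diagonal positions of $\rho$ bijectively onto those of $H$, so $C_{l_0}(H)=C_{l_0}(\rho)$; and it fixes each diagonal entry, so $\<jk|H|jk\>=\<jk|\rho|jk\>\geq0$, and the same holds for every principal submatrix of $H$ (in this basis, or any permutation of it). It therefore suffices to show that $H$ has at most $\tfrac{1}{2}C_{l_0}(H)$ negative eigenvalues.

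For this I would run a connected-components argument. Let $G$ be the graph whose vertices are the basis labels and whose edges are the unordered pairs of distinct labels $p,q$ with $H_{pq}\neq0$; since $H_{qp}=\overline{H_{pq}}$, we have $C_{l_0}(H)=2|E(G)|$. Permute the basis so that $H$ is block diagonal along the connected components $V_1,\dots,V_r$ of $G$; then the negative eigenvalues of $H$ are precisely those of the blocks $H_1,\dots,H_r$ taken together. Each $H_s$ is Hermitian of size $|V_s|$ with nonnegative diagonal, hence has nonnegative trace and so cannot be negative definite, whence it has at most $|V_s|-1$ negative eigenvalues. Since $G$ is connected on $V_s$ it has at least $|V_s|-1$ edges inside $V_s$, so the number of negative eigenvalues of $H_s$ is at most the number of edges of $G$ inside $V_s$. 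Summing over $s$, the matrix $H$ has at most $|E(G)|=\tfrac{1}{2}C_{l_0}(H)$ negative eigenvalues, as desired.

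There is no genuine obstacle here. The only points that require care are the two reductions in the first step — that partial transposition preserves the diagonal and, through the bijection of matrix positions, the number of nonzero off-diagonal entries — together with the elementary facts that a Hermitian matrix of nonnegative trace is not negative definite and that a connected graph on $v$ vertices has at least $v-1$ edges. It is worth noting that the bound is tight: for a canonical pairing state of \thref{thm:Negl1}, $H=\rho^{\rmT_\rmA}$ decomposes into $1\times1$ and $2\times2$ blocks and each $2\times2$ block can contribute one negative eigenvalue, so equality $C_{l_0}(\rho)=2\caN_0(\rho)$ can occur.
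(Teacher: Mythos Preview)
Your proof is correct, but it takes a genuinely different route from the paper's. The paper uses an \emph{additive} decomposition: with $r=C_{l_0}(\rho)/2$, it writes $\rho^{\rmT_\rmA}=M_0+\sum_{j=1}^r M_j$, where $M_0$ is the (nonnegative) diagonal and each $M_j$ collects one conjugate pair of off-diagonal entries, so $M_j$ is Hermitian of rank $2$ with exactly one negative eigenvalue. The bound then follows from the subadditivity of the number of negative eigenvalues under Hermitian sums (a Weyl/min--max argument the paper leaves implicit). Your argument instead uses a \emph{block} decomposition along connected components of the off-diagonal support graph, bounding the negative eigenvalues of each block by $|V_s|-1$ via nonnegativity of the trace and then by $|E_s|$ via the spanning-tree inequality.

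Both arguments are short and elementary. Yours avoids invoking Weyl-type inequalities at the cost of a tiny graph-theoretic fact, and it makes the structure of equality cases transparent (your remark on canonical pairing states is apt). The paper's decomposition is slightly more direct and yields the same bound uniformly without first isolating components, but it tacitly relies on the fact that adding a positive semidefinite matrix cannot increase the number of negative eigenvalues and that this count is subadditive over the rank-$2$ pieces.
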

When $\rho$ is a pairing state, $\caN_0(\rho)$ will be referred to as the \emph{pairing number}. According to \Thref{thm:Negl1} and \pref{pro:N0l0}, $\caN_0(\rho)$ is equal to the minimum of $C_{l_0}(\rho)$ under local unitary transformations. If $\rho$ is a canonical pairing state as in \eref{eq:Negl1Equal}, then $\caN_0(\rho)=C_{l_0}(\rho)/2$ is equal to  the number of transpositions in the disjoint cycle decomposition of $\pi$. A pairing state is entangled iff its pairing number is nonzero and is thus negative partial transpose (NPT). Actually, we have a  stronger conclusion, as shown in the following theorem and proved in  \aref{app:PairingDistillProof}. 
\begin{theorem}\label{thm:PairingDistill}
	Any entangled pairing state $\rho$ is NPT and distillable.	 
\end{theorem}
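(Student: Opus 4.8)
The plan is to reduce the distillability claim to the classical fact that every NPT state of a two‑qubit system is distillable. Since entanglement, NPT‑ness, and distillability are all invariant under local unitaries, I would first pass to a canonical pairing state, so that $\rho^{\rmT_\rmA}$ has the form of \eref{eq:Negl1Equal}. If $\pi$ consisted only of fixed points, then $\rho^{\rmT_\rmA}$ would be diagonal in the reference basis, hence so would $\rho=(\rho^{\rmT_\rmA})^{\rmT_\rmA}$, making $\rho$ incoherent and in particular separable. Thus an entangled canonical pairing state carries at least one genuine transposition $(jk,j'k')$; the convention that $\pi(ab)=ab$ whenever $a_{ab}=0$ forces $a_{jk}\neq0$, and Hermiticity of $\rho^{\rmT_\rmA}$ gives $a_{j'k'}=\overline{a_{jk}}$. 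Since $\pi$ is a bijection with $\pi(jk)=j'k'\neq jk$, the rows and columns labeled $jk$ and $j'k'$ in $\rho^{\rmT_\rmA}$ decouple from the rest into a $2\times2$ off‑diagonal block with entries $a_{jk}$ and $\overline{a_{jk}}$, whose eigenvalues are $\pm|a_{jk}|$. Hence $\rho^{\rmT_\rmA}$ has the strictly negative eigenvalue $-|a_{jk}|$, so $\rho$ is NPT; incidentally this identifies the pairing number with the number of transpositions, which is positive.

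For distillability I would fix such a transposition, write $a_{jk}=|a_{jk}|\rme^{\rmi\alpha}$, and set $\ket{\phi}:=(\ket{jk}-\rme^{-\rmi\alpha}\ket{j'k'})/\sqrt2$, so that $\rho^{\rmT_\rmA}\ket{\phi}=-|a_{jk}|\ket{\phi}$ by the previous step. I then apply the rank‑two local projectors $P_\rmA=\proj{j}+\proj{j'}$ and $P_\rmB=\proj{k}+\proj{k'}$ and keep the outcome $P_\rmA\otimes P_\rmB$ --- a stochastic LOCC operation. Its output $\sigma:=(P_\rmA\otimes P_\rmB)\rho(P_\rmA\otimes P_\rmB)$ is nonzero, because the partial‑transpose identity gives $\bra{j'k}\rho\ket{jk'}=\bra{jk}\rho^{\rmT_\rmA}\ket{j'k'}=a_{jk}\neq0$ while $\ket{j'k}$ and $\ket{jk'}$ both lie in the range of $P_\rmA\otimes P_\rmB$. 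Moreover $\sigma$ is supported on the two local qubits $\mathrm{span}\{\ket{j},\ket{j'}\}\otimes\mathrm{span}\{\ket{k},\ket{k'}\}$, and since partial transposition commutes with the (real, diagonal) local projectors and $\ket{\phi}$ lies in their joint range, $\bra{\phi}\sigma^{\rmT_\rmA}\ket{\phi}=\bra{\phi}\rho^{\rmT_\rmA}\ket{\phi}=-|a_{jk}|<0$. So $\sigma$ is a two‑qubit NPT state.

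To finish, I would invoke that all NPT states of a $2\times2$ system are distillable \cite{HoroHHH09}: a distillation protocol for $\sigma$ is LOCC within the two qubit subspaces and hence LOCC on $\caH_\rmA\otimes\caH_\rmB$, so composing it with the initial probabilistic projection distills maximally entangled states from $\rho$, and $\rho$ is distillable. The point that needs care is the choice of projection in the middle step: the retained state must be simultaneously nonzero and still have negative partial transpose, which is exactly what is guaranteed by keeping the support of the negative eigenvector $\ket{\phi}$ of $\rho^{\rmT_\rmA}$; with \thref{thm:Negl1} and two‑qubit distillability in hand, everything else is routine bookkeeping.
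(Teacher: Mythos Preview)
Your proof is correct and follows essentially the same approach as the paper: reduce to a canonical pairing state, locate a genuine transposition $(jk,j'k')$, project locally onto $\{\ket{j},\ket{j'}\}\otimes\{\ket{k},\ket{k'}\}$, and invoke two-qubit distillability of the resulting NPT state. The only cosmetic difference is that the paper additionally records (via \thref{thm:Negl1}) that $jk'$ and $j'k$ are fixed points with $a_{jk'},a_{j'k}>0$, so that $P\rho P$ is explicitly a subnormalized two-qubit maximally correlated state; you instead verify nonvanishing and NPT-ness directly through the matrix element $\bra{j'k}\rho\ket{jk'}=a_{jk}$ and the negative eigenvector $\ket{\phi}$, which is equally valid.
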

A family of lower bounds for the distillable entanglement of pairing states is provided in \aref{app:DistillLB}. 

Pairing states are  interesting for many other reasons.
\begin{theorem}
The asymptotic exact PPT entanglement cost $E_\mrm{PPT}(\rho)$ of any pairing state $\rho$ is equal to the logarithmic negativity, that is, $E_\mrm{PPT}(\rho)=\caN_\rmL(\rho)$. 
\end{theorem}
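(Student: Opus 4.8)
The plan is to combine Theorem~\ref{thm:Negl1} with the result of Audenaert, Plenio, and Eisert \cite{AudePE03} recalled in the Preliminaries: for every state $\rho$ one has $\caN_\rmL(\rho)\le E_\mrm{PPT}(\rho)$, and this lower bound is tight whenever the operator $|\rho^{\rmT_\rmA}|^{\rmT_\rmA}$ is positive semidefinite. Since the lower bound is already in hand, it suffices to verify that every pairing state satisfies $|\rho^{\rmT_\rmA}|^{\rmT_\rmA}\ge0$.

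First I would reduce to the canonical case. If $\rho$ is a pairing state, then $\rho'=U\rho U^\dag$ is a canonical pairing state for some local unitary $U=U_\rmA\otimes U_\rmB$, and both $E_\mrm{PPT}$ and $\caN_\rmL$ are invariant under local unitaries: the former because local unitaries are reversible PPT operations, the latter because $\|\rho^{\rmT_\rmA}\|_1$ is (partial transposition turns conjugation by $U_\rmA\otimes U_\rmB$ into conjugation by the unitary $\bar U_\rmA\otimes U_\rmB$, and the trace norm is unitarily invariant). Hence it is enough to prove $E_\mrm{PPT}(\rho')=\caN_\rmL(\rho')$ when $(\rho')^{\rmT_\rmA}$ has the form in \eref{eq:Negl1Equal}.

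Next I would exploit that structure. Because $\pi$ is a product of disjoint transpositions, the matrix $(\rho')^{\rmT_\rmA}=\sum_{jk}a_{jk}\ket{jk}\bra{\pi(jk)}$ is a generalized permutation (monomial) matrix, with at most one nonzero entry in each row and each column. In the product basis it therefore decouples into $1\times1$ blocks carrying the fixed points of $\pi$ --- whose entries coincide with the corresponding diagonal entries of $\rho'$ and are hence real and nonnegative --- and, for each transposition $(jk,j'k')$, a $2\times2$ block acting on the subspace spanned by $\ket{jk}$ and $\ket{j'k'}$ of the off-diagonal form $a_{jk}\ket{jk}\bra{j'k'}+a_{j'k'}\ket{j'k'}\bra{jk}$, where Hermiticity of $(\rho')^{\rmT_\rmA}$ forces $a_{j'k'}=\overline{a_{jk}}$. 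Squaring such a block gives $|a_{jk}|^{2}$ times the identity on its support, so its matrix absolute value equals $|a_{jk}|$ times that identity. Consequently $|(\rho')^{\rmT_\rmA}|$ is \emph{diagonal} in the product basis with nonnegative entries; its partial transpose is itself, so $|(\rho')^{\rmT_\rmA}|^{\rmT_\rmA}=|(\rho')^{\rmT_\rmA}|\ge0$, and the Audenaert--Plenio--Eisert criterion yields $E_\mrm{PPT}(\rho')=\caN_\rmL(\rho')$, which transfers back to $\rho$ by the local-unitary invariance above.

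I do not expect a serious obstacle: the lower bound and the tightness criterion are imported wholesale from \cite{AudePE03}, and the only genuine computation is the observation that the monomial structure of the partial transpose --- which is exactly what Theorem~\ref{thm:Negl1} delivers --- makes $|(\rho')^{\rmT_\rmA}|$ diagonal. The one point deserving care is that the \emph{disjointness} of the transpositions really does decouple the blocks, so that the matrix absolute value may be computed block by block; once the monomial form is established this is immediate, and the remaining verifications (reality and nonnegativity of the fixed-point entries, Hermiticity constraint $a_{j'k'}=\overline{a_{jk}}$) are routine bookkeeping.
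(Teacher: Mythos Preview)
Your proposal is correct and follows essentially the same approach as the paper: reduce to the canonical case, observe that the monomial structure of $(\rho')^{\rmT_\rmA}$ forces $|(\rho')^{\rmT_\rmA}|$ to be diagonal in the product basis, and then invoke the Audenaert--Plenio--Eisert tightness criterion. The paper's proof is a compressed version of yours, simply asserting that $|\rho^{\rmT_\rmA}|=\sum_{jk}|a_{jk}|\,|jk\rangle\langle jk|$ is diagonal and proceeding directly from there.
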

\begin{proof}
Without loss of generality, we may assume that $\rho$ is a canonical pairing state, which has the form of
\eref{eq:Negl1Equal}. Then $|\rho^{\rmT_\rmA}|=\sum_{jk} |a_{jk}|(|jk\>\<jk|)$ is diagonal, so that $|\rho^{\rmT_\rmA}|^{\rmT_\rmA}=|\rho^{\rmT_\rmA}|$ is positive semidefinite, which implies  
the theorem according to \rcite{AudePE03}. 
\end{proof}

\begin{proposition}
	Any pairing state $\rho$ satisfies 	$E_\rmr(\rho)\leq  E_\lr(\rho)\leq \caN_\rmL(\rho)$, where $E_\rmr(\rho)$ and $E_\lr(\rho)$ are the relative entropy of entanglement and logarithmic robustness of entanglement. 
\end{proposition}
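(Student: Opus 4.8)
The plan is to obtain both inequalities from facts already collected in the Preliminaries, after a one-line local-unitary reduction. The three quantities $E_\rmr$, $E_\lr$, and $\caN_\rmL$ are all invariant under local unitaries: $E_\rmr$ and $E_\lr$ because local unitaries are reversible LOCC, and $\caN_\rmL$ because $\|\rho^{\rmT_\rmA}\|_1$ is unchanged when $\rho$ is conjugated by a local unitary. Hence, letting $U=U_\rmA\otimes U_\rmB$ be the local unitary that witnesses the pairing property and replacing $\rho$ by $U\rho U^\dagger$, we may assume that $\rho$ is a \emph{canonical} pairing state, so that $\caN(\rho)=C_{l_1}(\rho)$ and
\begin{equation}
\caN_\rmL(\rho)=\log\bigl(1+\caN(\rho)\bigr)=\log\bigl(1+C_{l_1}(\rho)\bigr)=C_\rmL(\rho).
\end{equation}

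For the right-hand inequality I would invoke the bound $E_\lr(\sigma)\le C_\rmL(\sigma)$, valid for every bipartite state $\sigma$ and already recalled in the Preliminaries (following \rcite{ZhuHC17}); applied to the canonical pairing state $\rho$ it gives $E_\lr(\rho)\le C_\rmL(\rho)=\caN_\rmL(\rho)$. For the left-hand inequality I would use the standard fact that the relative entropy of entanglement never exceeds the logarithmic robustness of entanglement: if $\sigma=(\rho+s\tau)/(1+s)$ is separable with $s$ equal to the robustness of $\rho$, then $\sigma\ge\rho/(1+s)$ since $\tau\ge 0$, so operator monotonicity of the logarithm gives $\log\sigma\ge\log\rho-\log(1+s)\,\id$, whence $E_\rmr(\rho)\le S(\rho\|\sigma)=\tr\rho(\log\rho-\log\sigma)\le\log(1+s)=E_\lr(\rho)$. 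Chaining the two bounds yields $E_\rmr(\rho)\le E_\lr(\rho)\le\caN_\rmL(\rho)$, and local-unitary invariance transports the statement back to the original pairing state.

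There is essentially no obstacle here: the proposition is a corollary of the inequality $C_\rmL\ge E_\lr$ together with the pairing identity $\caN_\rmL(\rho)=C_\rmL(\rho)$, and the remaining bookkeeping is routine. The only points to keep track of are that local-unitary invariance must be applied to all three measures simultaneously, and that the mixing state $\tau$ entering the robustness is positive (whether one uses the separable-perturbation or global robustness), so that $\sigma\ge\rho/(1+s)$ indeed holds. If one prefers to avoid even the short derivation of $E_\rmr\le E_\lr$, that inequality can simply be cited from the entanglement-measure literature.
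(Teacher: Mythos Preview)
Your proposal is correct and follows essentially the same route as the paper: reduce to a canonical pairing state so that $\caN_\rmL(\rho)=C_\rmL(\rho)$, invoke the general bound $E_\lr\le C_\rmL$ (the paper phrases this as $E_\caR\le C_\caR\le C_{l_1}$, which is the same inequality before taking logarithms), and use the universal inequality $E_\rmr\le E_\lr$. The only cosmetic difference is that you spell out the operator-monotonicity argument for $E_\rmr\le E_\lr$, whereas the paper simply cites it.
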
	

\begin{proof}
The first inequality  applies to all states \cite{ZhuHC17}. To prove the second one, we may assume that $\rho$ is a canonical pairing state as in \eref{eq:Negl1Equal}, so that $C_{l_1}(\rho)= \caN(\rho)$. Now the conclusion follows from the following inequality
$E_\caR(\rho)\leq C_\caR(\rho)\leq C_{l_1}(\rho)$ \cite{NapoBCP16,PianCBN16}. 
\end{proof}

Pairing states include all bipartite pure states and maximally correlated states. 
Any bipartite pure state with Schmidt rank $r$ is a pairing state with pairing number $r(r-1)/2$. A \emph{canonical maximally correlated state} \cite{Rain99, StreSDB15,WintY16} has the form
	\begin{equation}\label{eq:MC}
	\rho_{\mrm{MC}}:=\sum_{r,s=0}c_{rs}|j_r k_r\>\<j_s k_s|,
	\end{equation}
	where $j_s\neq j_r$ and $k_s\neq k_r$ whenever $s\neq r$.  Note that
	$\mathcal{N}(\rho_{\mrm{MC}})=
	C_{l_1}(\rho_{\mrm{MC}})=\sum_{r\neq s} |c
	_{rs}|$. A maximally correlated state is any state that is equivalent to $\rho_{\mrm{MC}}$
	under a local unitary transformation. Here our definition is more general than the definition in some literature \cite{Rain99,ZhuMCF17,ZhuHC17}.

In general, pairing states do not have  to be maximally correlated. Nevertheless, those with the maximal pairing number are when $\dim(\caH_\rmA)=\dim(\caH_\rmB)$. 
The following theorem is proved in \aref{app:PairingNumMaxProof}.
\begin{theorem}\label{thm:PairingNumMax}
Let $\rho$ be a pairing state on $\caH_\rmA\otimes \caH_\rmB$	with $d_\rmA:=\dim(\caH_\rmA)=\dim(\caH_\rmB)$. Then the pairing number of $\rho$ is at most $d_\rmA(d_\rmA-1)/2$, and the upper bound is saturated only if $\rho$ is  maximally correlated. 
\end{theorem}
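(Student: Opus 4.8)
The plan is to reduce to the canonical form provided by \thref{thm:Negl1}. Both the pairing number and the property of being maximally correlated are invariant under local unitaries, so I may assume that $\rho$ is a canonical pairing state, meaning $\rho^{\rmT_\rmA}=\sum_{jk}a_{jk}|jk\>\<\pi(jk)|$ with $\pi$ a product of disjoint transpositions obeying the structural constraints of \thref{thm:Negl1}; recall that $\caN_0(\rho)$ then equals the number of transpositions of $\pi$. I picture the product basis as a $d_\rmA\times d_\rmB$ grid of cells $(j,k)$. A transposition of $\pi$ pairs a cell $(j,k)$ with a cell $(j',k')$ in a strictly different row and column, and---this is the crucial feature---forces the two crossed cells $(j,k')$ and $(j',k)$ to be fixed points (bachelors) of $\pi$.

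The first step is a counting lemma: every row of the grid contains at least one fixed point of $\pi$. Indeed, if some cell $(j,k)$ in row $j$ is moved, then writing its partner as $\pi(jk)=j'k'$, the cell $(j,k')$ is a fixed point lying in row $j$, and it is distinct from $(j,k)$ because $k'\neq k$; and if no cell of row $j$ is moved, the whole row consists of fixed points. The identical argument applied to columns shows every column contains a fixed point. Hence $\pi$ has at least $\max(d_\rmA,d_\rmB)$ fixed points, so the number of moved cells, which equals $2\caN_0(\rho)$, is at most $d_\rmA d_\rmB-\max(d_\rmA,d_\rmB)$. When $\dim(\caH_\rmA)=\dim(\caH_\rmB)=d_\rmA$ this yields $\caN_0(\rho)\leq d_\rmA(d_\rmA-1)/2$, the claimed bound.

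For the saturation statement, equality forces $\pi$ to have exactly $d_\rmA$ fixed points; since at least one lies in each row and at least one in each column, there is exactly one per row and exactly one per column, so the fixed points form the graph of a permutation of the basis labels. After a further local unitary permuting Bob's basis, I may assume these fixed points are precisely the diagonal cells $(j,j)$, while every off-diagonal cell is moved. Now the structural constraint is decisive: a transposition $(jk,j'k')$ requires $(j,k')$ and $(j',k)$ to be diagonal, i.e. $k'=j$ and $k=j'$, so $\pi$ must pair $(j,k)$ with $(k,j)$ for every $j\neq k$. Thus $\rho^{\rmT_\rmA}=\sum_j a_{jj}|jj\>\<jj|+\sum_{j\neq k}a_{jk}|jk\>\<kj|$, and transposing back on subsystem $\rmA$ gives $\rho=\sum_j a_{jj}|jj\>\<jj|+\sum_{j\neq k}a_{jk}|kk\>\<jj|=\sum_{jk}c_{jk}|jj\>\<kk|$ for suitable coefficients $c_{jk}$. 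This is a canonical maximally correlated state in the sense of \eref{eq:MC}, and since maximal correlation is a local-unitary invariant, the original $\rho$ is maximally correlated.

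I expect the saturation direction to be the main obstacle: one has to combine tightness of the row count and the column count with the crossed-cells-are-fixed clause of \thref{thm:Negl1} to pin down the cycle structure of $\pi$ exactly, and then remember that the maximally correlated form emerges only after transposing back to $\rho$, since it is not visible directly in $\rho^{\rmT_\rmA}$. The forward inequality, in contrast, is a brief combinatorial count once the canonical form and the bachelor constraint are in hand (the case $d_\rmA=1$ being trivial).
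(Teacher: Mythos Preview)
Your proof is correct. The saturation half matches the paper's reasoning closely, but your argument for the bound takes a genuinely different route. The paper introduces an auxiliary quantity $n$, the number of nonzero diagonal entries of $\rho$, and combines two constraints: $2m\le n(n-1)$ (from positive semidefiniteness of $\rho$, since nonzero off-diagonal entries can only connect nonzero diagonals) and $2m+n\le d_\rmA^2$ (since the canonical form gives $\rho^{\rmT_\rmA}$ at most one nonzero entry per row). These two inequalities together force $m\le d_\rmA(d_\rmA-1)/2$, with equality pinning $n=d_\rmA$. You instead invoke the ``crossed cells are fixed'' clause of \thref{thm:Negl1} directly to show every row (and column) of the grid contains a fixed point of $\pi$, which immediately bounds the number of moved cells by $d_\rmA^2-d_\rmA$. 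Your route is more purely combinatorial---the positivity of $\rho$ is entirely absorbed into \thref{thm:Negl1}---and it hands you the row/column structure of the fixed-point set for free, which feeds cleanly into the saturation analysis. The paper's route has the minor advantage of making the role of positivity explicit and of isolating the parameter $n$, but both arguments rest on the same structural input and are of comparable length.
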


\begin{corollary}\label{cor:Negl1Qubit}
A two-qubit state $\rho$ is a canonical pairing state iff it is diagonal or canonically  maximally correlated. 
\end{corollary}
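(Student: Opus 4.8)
The plan is to prove \crref{cor:Negl1Qubit} by directly classifying the admissible permutations $\pi$ appearing in \eref{eq:Negl1Equal} when $d_\rmA=d_\rmB=2$. By definition a state $\rho$ is a canonical pairing state iff $\rho^{\rmT_\rmA}=\sum_{jk}a_{jk}\ket{jk}\bra{\pi(jk)}$ for some $\pi$ that is a product of disjoint transpositions, with each transposition $(jk,j'k')$ satisfying $j\neq j'$, $k\neq k'$ and having $jk'$ and $j'k$ among its fixed points (and $\pi(jk)=jk$ whenever $a_{jk}=0$).

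First I would enumerate the candidate transpositions. Because the first indices $j,j'$ lie in $\{0,1\}$ and must be distinct, and similarly for the second indices, the only transpositions compatible with $j\neq j'$ and $k\neq k'$ are $(00,11)$ and $(01,10)$. Next I would apply the bachelor condition: $(00,11)$ forces $01$ and $10$ to be fixed points of $\pi$, whereas $(01,10)$ forces $00$ and $11$ to be fixed points; hence these two transpositions cannot occur together, and $\pi$ must be either the identity or exactly one of them, yielding three cases.

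Then I would compute $\rho=(\rho^{\rmT_\rmA})^{\rmT_\rmA}$ in each case. If $\pi=\mrm{id}$ then $\rho^{\rmT_\rmA}$ is diagonal, hence $\rho$ is diagonal. If $\pi=(00,11)$ then undoing the partial transpose places $\rho$ entirely in the block $\mathrm{span}\{\ket{01},\ket{10}\}$ with arbitrary $2\times2$ content there, which is exactly a canonical maximally correlated state in the sense of \eref{eq:MC} with $j_0k_0=01$, $j_1k_1=10$ (so $j_0\neq j_1$, $k_0\neq k_1$ as required); the case $\pi=(01,10)$ is the same with support on $\mathrm{span}\{\ket{00},\ket{11}\}$. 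This gives the "only if" direction. For the converse, a diagonal $\rho$ has $\rho^{\rmT_\rmA}=\rho$ diagonal, which is the $\pi=\mrm{id}$ instance of \eref{eq:Negl1Equal}; and for a canonical maximally correlated two-qubit state, $\rho^{\rmT_\rmA}$ has two diagonal entries plus one off-diagonal transposition pair among the four product basis states, again of the form \eref{eq:Negl1Equal}, consistently with the equality $\caN(\rho)=C_{l_1}(\rho)$ already noted for such states.

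The whole argument is a finite case check on top of \thref{thm:Negl1}, so I do not expect a real obstacle. The only step needing mild care is matching the explicit form of $\rho$ in the two transposition cases to the precise definition \eref{eq:MC} of a canonical maximally correlated state, i.e.\ checking the index-distinctness requirements and that Hermiticity of $\rho$ imposes no further constraint; this is immediate for qubits since the four relevant basis kets $j_0k_0$, $j_1k_1$, $j_0k_1$, $j_1k_0$ already exhaust the product basis.
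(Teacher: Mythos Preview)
Your proof is correct. The paper, however, takes a different route: it deduces \crref{cor:Negl1Qubit} from \thref{thm:Negl1} together with \thref{thm:PairingNumMax}. For $d_\rmA=d_\rmB=2$ the pairing number is at most $d_\rmA(d_\rmA-1)/2=1$, so it is either $0$ (whence $\rho$ is diagonal) or exactly $1$, in which case the bound of \thref{thm:PairingNumMax} is saturated and that theorem forces $\rho$ to be canonically maximally correlated. Your argument instead bypasses \thref{thm:PairingNumMax} entirely and works directly from the structural description \eref{eq:Negl1Equal}: you enumerate the two admissible transpositions $(00,11)$ and $(01,10)$, use the bachelor condition to rule out their coexistence, and then undo the partial transpose case by case. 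This is in effect the $d_\rmB=2$ specialization of the paper's proof of \crref{cor:Negl1QubitQudit}, which the paper also mentions as an alternative derivation of \crref{cor:Negl1Qubit}. Your approach is more elementary and self-contained for the two-qubit case; the paper's route via \thref{thm:PairingNumMax} has the advantage of exhibiting the result as an instance of a general counting principle rather than a finite check.
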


\begin{corollary}\label{cor:Negl1QubitQudit}
A bipartite state $\rho$ on $\caH_\rmA\otimes \caH_\rmB$	with $\dim(\caH_\rmA)=2$ is a canonical pairing state iff  $\rho$  has the form
	\begin{equation}
		\label{eq:rhoQubitQudit}
		\rho=\bigoplus_{j\geq 0}^t p_j \rho_j,
	\end{equation}
where $t\leq \dim(\caH_\rmB)/2$, $p_j\geq0$, $\sum_j  p_j=1$,   $\tr_\rmA(\rho_j)$ have mutually orthogonal supports, $\rho_0$ is diagonal, and $\rho_j$ for each $j\geq 1$ is a  canonical maximally correlated state.
\end{corollary}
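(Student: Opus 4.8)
The plan is to reduce the structure of a canonical pairing state on a qubit--qudit system to the general description of canonical pairing states provided by \thref{thm:Negl1}, and then exploit the severe constraints imposed by $\dim(\caH_\rmA)=2$. First I would invoke \thref{thm:Negl1}: $\rho$ is a canonical pairing state iff $\rho^{\rmT_\rmA}=\sum_{jk}a_{jk}|jk\>\<\pi(jk)|$, where $\pi$ is a product of disjoint transpositions satisfying the monogamy and fixed-point conditions stated there. Because the A-index $j$ ranges only over $\{0,1\}$, any transposition $(jk,j'k')$ with $j\neq j'$ must swap $j=0$ with $j'=1$. So every transposition pairs a basis vector $|0k\>$ with a vector $|1k'\>$, necessarily with $k\neq k'$, and the fixed-point requirement forces $|0k'\>$ and $|1k\>$ to be fixed points of $\pi$; in particular no two transpositions can share a B-index. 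This already suggests organizing the B-indices into ``paired'' indices (those appearing in some transposition) and the rest.

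Next I would pass from $\rho^{\rmT_\rmA}$ back to $\rho$ and read off the block structure. Since $\rho^{\rmT_\rmA}$ has support only on the span of $\{|jk\>\}$ with $k$ ranging over a subset $S$ of B-indices (the indices touched by some nonzero $a_{jk}$, including diagonal ones), and partial transpose does not change the set of occupied matrix positions in a way that mixes distinct B-indices incoherently, I expect $\rho$ itself to decompose as a direct sum over disjoint groups of B-indices. Concretely, define a graph on the B-indices where $k\sim k'$ if $(0k,1k')$ or $(0k',1k)$ is a transposition of $\pi$; monogamy shows each connected component has at most two vertices. A two-vertex component $\{k,k'\}$ contributes a block supported on $\mathrm{span}\{|0k\>,|0k'\>,|1k\>,|1k'\>\}$ whose partial transpose is, by the form in \eref{eq:Negl1Equal}, supported only on $|0k\>\<1k'|$, $|1k'\>\<0k|$ and the two diagonal entries $|0k\>\<0k|$, $|1k'\>\<1k'|$ — hence before transposing, the block is a rank-$\le 2$ operator supported on $\{|0k\>,|1k'\>\}$, i.e. (after normalization) exactly a two-dimensional canonical maximally correlated state $\rho_j$ as in \eref{eq:MC}. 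A one-vertex component contributes only diagonal entries, which collect into $\rho_0$. Taking $p_j$ to be the trace of each block and noting that distinct blocks live on orthogonal B-subspaces (so $\tr_\rmA(\rho_j)$ have orthogonal supports) and that $t$ counts the two-vertex components, which number at most $\dim(\caH_\rmB)/2$, gives \eref{eq:rhoQubitQudit}. The converse is immediate: a state of the form \eref{eq:rhoQubitQudit} has $\rho^{\rmT_\rmA}$ equal to a direct sum of the partial transposes of its blocks, each of which has the required transposition form, so $\rho$ is a canonical pairing state.

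The main obstacle I anticipate is the bookkeeping in the forward direction: one must be careful that no ``stray'' off-diagonal entries of $\rho$ survive that would not fit the block decomposition. The subtle point is that $\rho$ could a priori have off-diagonal entries linking different B-components (e.g. $|0k\>\<0k''|$ with $k,k''$ in different groups); I would rule these out by showing that any such entry, upon partial transposition, produces an off-diagonal entry of $\rho^{\rmT_\rmA}$ at a position not of the form $|jk\>\<\pi(jk)|$ required by \eref{eq:Negl1Equal} — using that partial transpose over a qubit A only swaps the two A-labels and leaves B-labels fixed, so the set of occupied positions of $\rho$ and of $\rho^{\rmT_\rmA}$ determine each other. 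Establishing this ``no cross terms'' fact rigorously, together with checking that each two-dimensional block is genuinely of canonical maximally correlated form (and in particular that its diagonal is supported exactly on the two vectors $|0k\>,|1k'\>$ and not on $|0k'\>,|1k\>$, which is where the fixed-point hypothesis of \thref{thm:Negl1} is essential), is the technical heart of the argument; everything else is routine.
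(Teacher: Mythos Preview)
Your approach is the paper's: invoke \thref{thm:Negl1}, use $\dim(\caH_\rmA)=2$ to force every transposition to have the form $(0k,1k')$, then use the fixed-point clause to conclude that distinct transpositions occupy disjoint pairs of B-indices, giving the block decomposition. The paper executes this tersely by writing $\rho^{\rmT_\rmA}$ out explicitly and asserting that all B-indices appearing are distinct.

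There is, however, a bookkeeping error in your description of the two-vertex block, and it matters. If $(0k,1k')$ is a transposition of $\pi$, then in \eref{eq:Negl1Equal} the term indexed by $0k$ is $a_{0k}\,|0k\>\<1k'|$, which is off-diagonal; there is \emph{no} diagonal contribution at $|0k\>$ or at $|1k'\>$. The fixed-point clause of \thref{thm:Negl1} places the diagonals at $|0k'\>$ and $|1k\>$ instead (and the proof of that theorem shows these coefficients are strictly positive). So the block of $\rho^{\rmT_\rmA}$ on the B-pair $\{k,k'\}$ is not supported on a two-dimensional subspace at all: it has diagonals at $|0k'\>,|1k\>$ and off-diagonals $|0k\>\<1k'|,\ |1k'\>\<0k|$. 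Only after undoing the partial transpose do the off-diagonals move to $|0k'\>\<1k|,\ |1k\>\<0k'|$, so that $\rho$ itself collapses onto $\{|0k'\>,|1k\>\}$ --- the opposite pair to the one you name. With this correction (and with the roles of fixed points and transposed points swapped in your final paragraph), the rest of your argument goes through unchanged.
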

\Crref{cor:Negl1Qubit} is an immediate consequence of \thsref{thm:Negl1} and \ref{thm:PairingNumMax}. It is a special case of \crref{cor:Negl1QubitQudit}, which is proved in  \aref{app:Negl1QubitQuditProof}. 
Remarkably, several key entanglement (coherence) measures of the canonical pairing state in \eref{eq:rhoQubitQudit} can be computed explicitly.  Note that the distillable coherence $C_\rmD$   and coherence cost $C_\rmC$ are equal to the relative entropy of coherence $C_\rmr$ and coherence of formation $C_\rmF$, respectively \cite{WintY16}, while the
counterparts $E_\rmD$   and  $E_\rmC$
for entanglement are in general very difficult to compute. The following theorem is  proved in
\aref{app:EoFAddpairProof}.
\begin{theorem}\label{thm:EoFAddpair}
The distillable entanglement $E_\rmD$ (coherence $C_\rmD$)  and entanglement cost $E_\rmC$ (coherence cost $C_\rmC$) of the canonical pairing state $\rho$ in \eref{eq:rhoQubitQudit}  read
\begin{align}\label{eq:EntCost}
C_\rmD(\rho)=C_\rmr(\rho)=E_\rmD(\rho)=&E_\rmr(\rho)=S(\rho^{\diag})-S(\rho),\\
C_\rmC(\rho)=C_\rmF(\rho)=E_\rmC(\rho)=&E_\rmF(\rho)=\sum_j p_j E_\rmF(\rho_j),
\end{align}
where $\rho^{\diag}$ is the diagonal matrix with the same diagonal as $\rho$ and $E_\rmF(\rho_j)=H\bigl(\frac{1+\sqrt{1-\caN(\rho_j)^2}}{2}\bigr)$ with binary entropy function $H(x):=-x\log x -(1-x)\log (1-x)$.
\end{theorem}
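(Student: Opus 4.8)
The plan is to treat the two displayed lines of \eref{eq:EntCost} separately, and in each line to establish a chain of inequalities that collapses because the outer bounds coincide. For the distillable quantities, the key observation is that $\rho$ has the block form \eref{eq:rhoQubitQudit} with $\tr_\rmA(\rho_j)$ supported on mutually orthogonal subspaces, so Bob can perform an incoherent (indeed local orthogonal projective) measurement that identifies the block index $j$ without disturbing the state; hence distillation and dilution decompose over the blocks. For each block $\rho_j$ ($j\ge 1$) I would invoke the qubit-qudit structure: $\rho_j$ is a canonical maximally correlated state, and for maximally correlated states the relative entropy of entanglement, distillable entanglement, and the coherence analogs all equal $S(\rho_j^{\diag})-S(\rho_j)$ by the results of \rcite{Rain99,WintY16} (and the general bound $E_\rmr\le C_\rmr$ from \rcite{ZhuHC17}). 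The diagonal block $\rho_0$ contributes nothing to entanglement but also has $S(\rho_0^{\diag})=S(\rho_0)$, so it is consistent. Adding an overall classical flag of weight $\{p_j\}$ shifts the diagonal entropy and the state entropy by the same $H(\{p_j\})$, so the block-sum identity $C_\rmr(\rho)=\sum_j p_j C_\rmr(\rho_j) + (\text{flag terms that cancel})$ assembles into $S(\rho^{\diag})-S(\rho)$. The sandwich is then $E_\rmD\le E_\rmr\le C_\rmr=C_\rmD$ (using $E_\rmD\le C_\rmD$ or the coherence-to-entanglement comparison) against the matching lower bound $E_\rmD\ge S(\rho^{\diag})-S(\rho)$ coming from the explicit block distillation protocol, forcing all four to be equal.

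For the cost quantities, the strategy is dual. Superadditivity/convexity gives $E_\rmC\le E_\rmF\le\sum_j p_j E_\rmF(\rho_j)$ (and the same for coherence with $C_\rmF$), so I need a matching lower bound $E_\rmC(\rho)\ge\sum_j p_j E_\rmF(\rho_j)$. Here I would use that the block index is accessible to a free (incoherent/local) measurement, so any dilution protocol producing $\rho$ can be post-processed to produce each $\rho_j$ at rate $p_j$; hence $E_\rmC(\rho)\ge\sum_j p_j E_\rmC(\rho_j)$, and it remains to show $E_\rmC(\rho_j)=E_\rmF(\rho_j)$ for a single canonical maximally correlated qubit-qudit block, i.e. that entanglement of formation is additive on these states. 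Additivity of $E_\rmF$ for maximally correlated states is known (it follows from the additivity results in the maximally-correlated literature, e.g. via the fact that $E_\rmF$ equals the regularized quantity there), which upgrades $E_\rmF$ to $E_\rmC$. The closed form $E_\rmF(\rho_j)=H\!\bigl(\frac{1+\sqrt{1-\caN(\rho_j)^2}}{2}\bigr)$ comes from the qubit-qudit restriction: $\rho_j$ lives on a $2\otimes d$ system but its support, after the maximally-correlated normal form, is effectively a $2\otimes 2$ block spanned by $\{|j_1k_1\>,|j_2k_2\>\}$ (only two correlated pairs are allowed when $d_\rmA=2$), so Wootters' concurrence formula applies and the concurrence of such a rank-two state equals $|c_{12}|+|c_{12}^*|=\caN(\rho_j)$; plugging $C=\caN(\rho_j)$ into $E_\rmF=H\!\bigl(\frac{1+\sqrt{1-C^2}}{2}\bigr)$ gives the stated expression, and for pure $\rho_j$ this reduces correctly via \eref{eq:er}.

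The main obstacle I anticipate is the cost side — specifically, establishing $E_\rmC(\rho)\ge\sum_j p_j E_\rmF(\rho_j)$ cleanly. Two things must be handled with care: first, that the flag register really is freely accessible so that a global dilution can be ``unmixed'' into per-block dilutions without spending extra entanglement (this uses the orthogonality of the $\tr_\rmA(\rho_j)$ and that the associated projector is a local incoherent operation on Bob's side); and second, the additivity of $E_\rmF$ on a single canonical maximally correlated state, which is needed to promote $E_\rmF$ to the regularized cost $E_\rmC$. The distillable side is comparatively routine once one has the maximally-correlated hashing bound and the coherence-entanglement sandwich. I would also double-check the concurrence computation for the qubit-qudit maximally correlated block, since the reduction from $2\otimes d$ to an effective $2\otimes 2$ support is where the dimensional hypothesis $d_\rmA=2$ is really used and where Corollary~\ref{cor:Negl1QubitQudit} must be invoked to guarantee each $\rho_j$ has at most one correlated pair.
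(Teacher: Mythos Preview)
Your proposal is correct and matches the paper's approach: both use Bob's local incoherent projectors onto the supports of $\tr_\rmA(\rho_j)$ to combine convexity with selective monotonicity and reduce all four measures to the maximally correlated blocks, then invoke the known identities $E_\rmD=E_\rmr$, $C_\rmF=E_\rmF$, additivity of $E_\rmF$ for maximally correlated states, and Wootters' formula with concurrence $=\caN(\rho_j)$ on the effective $2\times2$ support. The one cosmetic difference concerns precisely the obstacle you flagged for $E_\rmC$: rather than post-processing a global dilution, the paper computes $E_\rmF(\rho^{\otimes n})=nE_\rmF(\rho)$ directly by applying the same block decomposition to $\rho^{\otimes n}$ and citing \rscite{VidaDC02,HoroSS03} for additivity of $E_\rmF$ on maximally correlated tensor factors, which stays at the level of $E_\rmF$ and sidesteps the asymptotic unmixing argument altogether.
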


\section{Operational connection between the $l_1$-norm of coherence and negativity}
Recently Streltsov et al. introduced a general framework for constructing coherence measures based on the maximum entanglement  produced by incoherent operations acting on the system and an incoherent ancilla \cite{StreSDB15}. More precisely, given any entanglement measure $E$, one can define a coherence measure $C_E$ as follows,
\begin{equation}\label{eq:EntGen}
C_E(\rho):=\lim_{d_\rmA\rightarrow \infty}\left\{\sup_{\Lambda_\rmi}E\left(\Lambda_\rmi\left[\rho\otimes \outer{0}{0}\right]\right)\right\}.
\end{equation}
Here $d_\rmA$ denotes the dimension of the ancilla, and the supremum is over
all operations $\Lambda_\rmi$ that are incoherent.
It turns out  that $C_E(\rho)$ coincides with the relative entropy of coherence when $E$ is the relative entropy of entanglement \cite{StreSDB15}. 
Moreover, a one-to-one mapping can be established between coherence measures and entanglement measures that are based on the convex roof~\cite{ZhuMCF17}. However, the situation is not clear for many other induced  coherence measures.

Here we are interested in the coherence measure induced from the negativity
\begin{align}\label{eq:EntGenN}
C_{\mathcal{N}}(\rho)&:=\lim_{d_\rmA\rightarrow \infty}\left\{\sup_{\Lambda_\rmi}\mathcal{N}\left(\Lambda_\rmi\left[\rho\otimes \outer{0}{0}\right]\right)\right\}
\end{align}
and the analog $C_{\mathcal{N}_\rmL}(\rho)$ from the logarithmic negativity.

\begin{theorem}\label{thm:Cohl1OI}
	\begin{align}
C_{\mathcal{N}}(\rho)&= C_{l_1}(\rho) \quad \forall \rho   \label{eq:ClCN},          \\
C_{\mathcal{N}_\rmL}(\rho)
&=C_{\rmL}(\rho)
=\log(1+C_{l_1}(\rho))
\quad \forall \rho.  \label{eq:ClCNlog}
	\end{align}
\end{theorem}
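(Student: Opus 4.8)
The plan is to prove \eref{eq:ClCN} by a sandwich argument, $C_{l_1}(\rho)\le C_{\mathcal{N}}(\rho)\le C_{l_1}(\rho)$, and then to deduce \eref{eq:ClCNlog} from it by exploiting that $\mathcal{N}_\rmL=\log(1+\mathcal{N})$ is a strictly increasing, continuous function of $\mathcal{N}$. Throughout, the bipartite reference basis of $\caH\otimes\caH_\rmA$ is taken to be the tensor product of the two reference bases, so that \Thref{thm:Negl1} applies to the states produced by the incoherent operations.

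For the upper bound I would fix an arbitrary ancilla dimension $d_\rmA$ and an arbitrary incoherent operation $\Lambda_\rmi$, and apply \Thref{thm:Negl1} to the output state $\sigma:=\Lambda_\rmi[\rho\otimes\outer{0}{0}]$, which gives $\mathcal{N}(\sigma)\le C_{l_1}(\sigma)$. Since $C_{l_1}$ is a weak coherence monotone~\cite{BaumCP14}, it satisfies $C_{l_1}(\Lambda_\rmi[\tau])\le C_{l_1}(\tau)$ for every state $\tau$; and appending the incoherent ancilla $\outer{0}{0}$ neither creates nor enlarges off-diagonal entries, so $C_{l_1}(\rho\otimes\outer{0}{0})=C_{l_1}(\rho)$. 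Chaining these facts yields $\mathcal{N}(\sigma)\le C_{l_1}(\sigma)\le C_{l_1}(\rho\otimes\outer{0}{0})=C_{l_1}(\rho)$ for every $\Lambda_\rmi$ and every $d_\rmA$; taking the supremum over $\Lambda_\rmi$ and then the limit $d_\rmA\to\infty$ gives $C_{\mathcal{N}}(\rho)\le C_{l_1}(\rho)$.

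For the lower bound I would exhibit a single operation that already saturates the bound at finite ancilla dimension. Take $d_\rmA\ge d:=\dim\caH$ and let $\Lambda_\rmi(\cdot)=\cnot(\cdot)\cnot^\dagger$ with the generalized CNOT unitary $\cnot\ket{j}\ket{0}=\ket{j}\ket{j}$, which is monomial and hence incoherent. Writing $\rho=\sum_{jk}\rho_{jk}\ketbra{j}{k}$, a short computation gives $\cnot(\rho\otimes\outer{0}{0})\cnot^\dagger=\sum_{jk}\rho_{jk}\ketbra{jj}{kk}$, a canonical maximally correlated state with coefficients $c_{jk}=\rho_{jk}$. By the remark following \eref{eq:MC} (equivalently, by diagonalizing its partial transpose, which splits into the diagonal block $\sum_j\rho_{jj}\outer{jj}{jj}$ together with $2\times2$ blocks having eigenvalues $\pm|\rho_{jk}|$), its negativity equals $\sum_{j\neq k}|\rho_{jk}|=C_{l_1}(\rho)$. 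Hence, combined with the upper bound, the supremum in \eref{eq:EntGenN} equals $C_{l_1}(\rho)$ for every $d_\rmA\ge d$, so its limit is $C_{l_1}(\rho)$, which proves \eref{eq:ClCN}.

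Finally, because this same CNOT construction attains the supremum and $x\mapsto\log(1+x)$ is monotone increasing and continuous, $\sup_{\Lambda_\rmi}\mathcal{N}_\rmL(\Lambda_\rmi[\rho\otimes\outer{0}{0}])=\log\bigl(1+\sup_{\Lambda_\rmi}\mathcal{N}(\Lambda_\rmi[\rho\otimes\outer{0}{0}])\bigr)=\log(1+C_{l_1}(\rho))=C_\rmL(\rho)$ for every $d_\rmA\ge d$, whence $C_{\mathcal{N}_\rmL}(\rho)=C_\rmL(\rho)$, establishing \eref{eq:ClCNlog}. I do not anticipate a real obstacle here: the argument rests on \Thref{thm:Negl1} for the easy inequality and on the elementary observation that the generalized CNOT maps $\rho\otimes\outer{0}{0}$ onto precisely a maximally correlated state — the saturation case of \Thref{thm:Negl1} — for the hard inequality. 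The only points requiring mild care are the ancilla-invariance $C_{l_1}(\rho\otimes\outer{0}{0})=C_{l_1}(\rho)$, the monotonicity of $C_{l_1}$ under $\Lambda_\rmi$, and the interchange of the supremum with the increasing map $\log(1+\cdot)$, all of which are routine.
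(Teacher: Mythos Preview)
Your proposal is correct and follows essentially the same route as the paper: the upper bound via \thref{thm:Negl1} together with the monotonicity of $C_{l_1}$ and the invariance $C_{l_1}(\rho\otimes\outer{0}{0})=C_{l_1}(\rho)$, and the lower bound via the generalized CNOT producing a canonically maximally correlated state with $\mathcal{N}=C_{l_1}(\rho)$. Your derivation of \eref{eq:ClCNlog} via the monotonicity of $x\mapsto\log(1+x)$ is just a slightly more explicit version of the paper's remark that \eref{eq:ClCNlog} is a direct consequence of \eref{eq:ClCN}.
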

\Thref{thm:Cohl1OI} offers an operational interpretation of the $l_1$-norm  of coherence as the  maximum entanglement, measured by the 
negativity, produced by incoherent operations acting on the system and an incoherent ancilla. 
A similar interpretation applies to the logarithmic $l_1$-norm of coherence. Compared with known
connections between the $l_1$-norm of coherence and negativity, the one established here is much more precise. In addition, \eref{eq:ClCN} still holds if 
$C_{l_1}$ and $\mathcal{N}$ are replaced by their respective convex roofs $\hat{C}_{l_1}$ and $\hat{\mathcal{N}}$ according to \rcite{ZhuMCF17}, that is, $C_{\hat{\mathcal{N}}}(\rho)= \hat{C}_{l_1}(\rho)$. Similarly,  $C_{\hat{\mathcal{N}}_\rmL}(\rho)
=\hat{C}_{\rmL}$. These results further corroborate the strong connections between the $l_1$-norm  of coherence and negativity.

\begin{proof}
Note that \eref{eq:ClCNlog} is a direct consequence of \eref{eq:ClCN}, so it suffices to prove \eref{eq:ClCN}.	
The  inequality $C_\mathcal{N}(\rho)\leq C_{l_1}(\rho)$ follows from the equation
\begin{align}
&\mathcal{N}\left(\Lambda_\rmi\left[\rho\otimes \outer{0}{0}\right]\right)\leq C_{l_1}\left(\Lambda_\rmi\left[\rho\otimes \outer{0}{0}\right]\right)\nonumber\\
&\leq C_{l_1}\left(\rho\otimes \outer{0}{0}\right)
=C_{l_1}(\rho)
\end{align}
for any incoherent operation $\Lambda_\rmi$, 
where the two inequalities follow from \thref{thm:Negl1} and  the monotonicity of $C_{l_1}$, respectively.

To prove the converse inequality $C_{\mathcal{N}}(\rho)\geq C_{l_1}(\rho)$, assume that the ancilla has the same dimension as the system under consideration, so we can construct the generalized CNOT gate $U_{\mrm{CN}}|jk\>=|j (j+k)\>$, 
which  is strictly incoherent. Given $\rho=\sum_{jk}\rho_{jk}\outer{j}{k}$, the state
\begin{equation}
\tilde{\rho}_{\mrm{MC}}:= U_{\mrm{CN}}\left(\rho\otimes \outer{0}{0}\right) U_{\mrm{CN}}^\dag =\sum_{jk}\rho_{jk}\outer{jj}{kk}
\end{equation}
is canonically maximally correlated. Since $\caN(\tilde{\rho}_{\mrm{MC}})=C_{l_1}(\tilde{\rho}_{\mrm{MC}})=C_{l_1}(\rho)$,
we conclude that $C_{\mathcal{N}}(\rho)\geq C_{l_1}(\rho)$, which confirms \eref{eq:ClCN} given the opposite inequality. 
\end{proof}

\section{Summary}
We  proposed an axiomatic characterization of the
$l_1$-norm of coherence, which reveals a precise analogy to the negativity in entanglement theory and the sum negativity in the resource theory of magic-state quantum computation.  In addition,  we  provided an operational interpretation of  the $l_1$-norm of coherence as the maximum entanglement,  measured by the negativity, created by (strictly) incoherent operations acting on the system and an incoherent ancilla. Similar interpretations  apply to the  logarithmic  $l_1$-norm of coherence and its convex roof.
In the course of  study, we clarified the relation between the $l_1$-norm of coherence and negativity for  bipartite states. We also proposed pairing states as 
a generalization of   maximally correlated states and as a bridge for connecting coherence theory and entanglement theory. Surprisingly,  the asymptotic exact PPT entanglement cost of any pairing state is equal to the logarithmic negativity. Moreover, entangled pairing states are  distillable, and their
 entanglement cost and distillable entanglement can be computed explicitly for a qubit-qudit system. 

Our study  not only clarified the meaning of the $l_1$-norm of coherence, but also provided a unified perspective for understanding  three prominent resource theories, namely, entanglement, coherence, and magic-state quantum computation, which are of interest to a wide spectrum of researchers.

\bigskip

\acknowledgments
HZ acknowledges financial support from the Excellence Initiative of the German Federal and State Governments
(ZUK~81) and the DFG.
MH was supported in part by a JSPS Grant-in-Aid for Scientific Research (A) No. 17H01280, (B) No. 16KT0017, and Kayamori Foundation of Informational Science Advancement.
LC was supported by Beijing Natural Science Foundation (Grant No. 4173076), the National Natural Science Foundation (NNSF) of China (Grant No. 11501024), and the Fundamental Research Funds for the Central Universities 
(Grants No. KG12001101, No. ZG216S1760, and No. ZG226S17J6).
\bigskip

\appendix

\section*{Appendix}
In this appendix, we prove \thsref{thm:l1Unique}, \ref{thm:Negl1}, \ref{thm:PairingDistill}, \ref{thm:PairingNumMax}, \ref{thm:EoFAddpair}, \pref{pro:N0l0}, and \crref{cor:Negl1QubitQudit}  in the main text. In order to prove \thref{thm:Negl1}, 
we show that the $l_1$-norm is an upper bound
for the trace norm and determine the condition for saturating this bound. 
In addition, we  provide additional 
examples and nonexamples of pairing states and derive a family of lower bounds for the distillable entanglement of pairing states.

\section{Proof of \thref{thm:l1Unique}\label{app:l1UniqueProof}}
\Thref{thm:l1Unique} is an immediate consequence of \lsref{lem:SymUnique} and \ref{lem:l1Unique} below.

\begin{lemma}\label{lem:SymUnique}
	Suppose $C(\rho)$ is a continuous  weak coherence monotone that
	is a symmetric function of nonzero 
	off-diagonal entries of $\rho$. Then  $C(\rho)$
	is a symmetric function of the absolute values of nonzero 
	off-diagonal entries of $\rho$. 
\end{lemma}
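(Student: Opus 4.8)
The plan is to show first that $C$ is unchanged under conjugation by diagonal unitaries, then to combine this with the symmetry axiom to reduce a state to a "real, nonnegative" form, and finally to handle the states for which this reduction is obstructed by a cycle in the pattern of nonzero off‑diagonal entries.

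\textit{Step 1 (diagonal unitaries).} A diagonal unitary $D=\diag(e^{\rmi\theta_1},\dots,e^{\rmi\theta_d})$ is monomial, hence incoherent, and so is $D^\dag$; applying the weak‑monotone inequality in both directions gives $C(D\rho D^\dag)=C(\rho)$ for every state $\rho$. Since $(D\rho D^\dag)_{jk}=e^{\rmi(\theta_j-\theta_k)}\rho_{jk}$, this means $C$ is invariant whenever the off‑diagonal entries are multiplied by a "coboundary" phase pattern $\phi_{jk}=\theta_j-\theta_k$. (If one insists on strictly incoherent operations, $D$ is monomial, hence strictly incoherent, so the same holds.)

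\textit{Step 2 (acyclic patterns and low coherence).} Let $G(\rho)$ be the graph whose edges are the positions $(j,k)$, $j<k$, with $\rho_{jk}\neq0$. If $G(\rho)$ is a forest, the equations $\theta_j-\theta_k=-\arg\rho_{jk}$ can be solved on each tree component, so some $D$ makes all off‑diagonal entries of $D\rho D^\dag$ real and nonnegative; the symmetry axiom applied to $D\rho D^\dag$ then shows $C(\rho)$ depends only on $\{|\rho_{jk}|\}$. The crucial point is that, by hypothesis, $C(\rho)$ depends on nothing but the multiset of nonzero off‑diagonal entries — the locations and the diagonal are irrelevant — so $C(\rho)=C(\sigma)$ whenever $\rho,\sigma$ share that multiset. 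When $C_{l_1}(\rho)\le1$ one may realize the multiset by a "matching" state $\sigma=\bigoplus_e\bigl(\begin{smallmatrix}p_e & v_e\\ \bar v_e & p_e\end{smallmatrix}\bigr)$, choosing $p_e\ge|v_e|$ with $\sum_e2p_e=1$ (possible because $\sum_e|v_e|=C_{l_1}(\rho)/2\le\tfrac12$); since $\sigma$ has forest support, the previous sentence applies. Hence $C(\rho)$ is a symmetric function of $\{|\rho_{jk}|\}$ for every $\rho$ with $C_{l_1}(\rho)\le1$.

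\textit{Step 3 (cyclic patterns — the main obstacle).} A forest‑supported state always has $C_{l_1}\le1$ (rephase to make off‑diagonals nonnegative and test $\rho$ against the $\pm1$ vector two‑colouring the forest), so the only states left are those with $C_{l_1}(\rho)>1$, whose support graph necessarily contains a cycle. For these, the freedom not yet pinned down by Steps 1–2 is exactly the collection of loop phases $\arg\prod_{\mathrm{cycle}}\rho_{jk}$, which are genuine invariants under diagonal unitaries and are not, a priori, altered by any incoherent operation that preserves $C_{l_1}$. Showing that $C$ is insensitive to these loop phases is the heart of the lemma. I would attack it using the continuity of $C$ together with the weak‑monotone inequality for tailored \emph{non}-unitary incoherent channels — for instance channels that "split" a vertex on a cycle, routing its two adjacent coherences to distinct output modes so that the support becomes a forest, at the cost of rescaling some entries — and then squeezing $C(\rho)$ between quantities that, by Step 2, depend only on absolute values, via a suitable limiting argument. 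I expect this limiting/compactness step, and the bookkeeping needed to make the squeeze two‑sided, to be the only genuinely delicate part; the rest is formal manipulation of the symmetry and monotonicity axioms.
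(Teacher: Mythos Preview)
Your Steps~1 and~2 are correct and cleanly handle the case $C_{l_1}(\rho)\le 1$. The difficulty, as you correctly diagnose, is entirely in Step~3, and here your proposal has a genuine gap rather than merely missing bookkeeping.

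The obstruction is structural. Monotonicity under incoherent operations gives only one-sided inequalities, so a ``squeeze'' requires producing both an incoherent map \emph{from} $\rho$ to a forest-supported state and one \emph{to} $\rho$ from such a state, with nearly the right absolute values on both ends. But you have already shown that any forest-supported state has $C_{l_1}\le 1$; hence when $C_{l_1}(\rho)>1$ there is no forest-supported state carrying the target multiset of absolute values, even approximately, and no incoherent map can raise $C_{l_1}$. So the vertex-splitting channels you describe can at best give $C(\rho)\ge C(\sigma_\varepsilon)$ for some $\sigma_\varepsilon$ with strictly smaller $l_1$-norm of coherence; the reverse inequality, the one you need to close the sandwich, is not available by this route. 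Continuity alone does not rescue this, since the gap is bounded away from zero.

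The paper's argument avoids the squeeze entirely and proceeds by \emph{equalities}. It exploits the standing assumption $C(\rho\otimes\sigma)=C(\rho)$ for incoherent $\sigma$ (which you do not invoke): tensor $\rho$ with the maximally mixed state on a large ancilla and apply a controlled diagonal unitary $U_G=\bigoplus_{U\in G}U$, where $G$ runs over all diagonal phase matrices with $K$th-root-of-unity entries. Both steps are reversible in the relevant sense, so $C$ is exactly preserved, and the resulting state $\rho_2$ has a phase-symmetric multiset of off-diagonal entries: every $|\rho_{jk}|\omega^a/K^d$ occurs with equal multiplicity for $a=0,\dots,K-1$. Because this multiset is already phase-invariant, one can \emph{repack} it (using the symmetry axiom, not monotonicity) into a new state $\rho_3$ built from direct sums of rank-one blocks $|\psi\rangle\langle\psi|$, each of which is individually diagonally rephasable to a nonnegative matrix. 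Thus $C(\rho)=C(\rho_2)=C(\rho_3)=C(\rho_4)$ where $\rho_4$ has nonnegative off-diagonal entries depending only on $\{|\rho_{jk}|\}$. Continuity is used only at the outset, to reduce to roots-of-unity phases so that a finite $K$ suffices.

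The conceptual move you are missing is that the ancilla-plus-symmetry axiom lets you \emph{enlarge} the multiset (in a controlled, $C$-preserving way) until loop phases become removable, rather than trying to break cycles inside the original state.
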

\begin{remark}
	According to the following proof, the assumption in \lref{lem:SymUnique} actually can be relaxed as follows: $C(\rho)$ is a continuous  symmetric function of nonzero 
	off-diagonal entries of $\rho$ that is invariant under incoherent unitary transformations and satisfies $C(\rho\otimes \sigma)=C(\rho)$ whenever $\sigma$ is an incoherent state.
\end{remark}
\begin{remark}
	Given a density matrix $\rho=\sum_{jk}\rho_{jk}|j\>\<k|$, the operator $\tau(\rho):=\sum_{jk}|\rho_{jk}|(|j\>\<k|)$ is not necessarily positive semidefinite.  
	One counterexample is 
	\begin{equation}
		\rho\rep\frac{1}{4}\begin{pmatrix}
			1 &a &0&-a\\
			a&1&a&0\\
			0&a&1&a\\
			-a&0&a&1
		\end{pmatrix},\;\;
		\tau(\rho)\rep\frac{1}{4}\begin{pmatrix}
			1 &a &0&a\\
			a&1&a&0\\
			0&a&1&a\\
			a&0&a&1
		\end{pmatrix},
	\end{equation}
	where $a=\frac{1}{\sqrt{2}}$. Note that $\rho$ has eigenvalues $\frac{1}{2}, \frac{1}{2},0,0$, while $\tau$ has eigenvalues $\frac{1}{4}(1\pm\sqrt{2}),\frac{1}{4},\frac{1}{4}$. 
	
	In \lref{lem:SymUnique} we do not assume that the monotone $C$ is also defined at $\tau(\rho)$ given a density matrix $\rho$ except when $\tau(\rho)$ is positive semidefinite. In addition,  the following proof does not resort to $\tau(\rho)$, but converts $\rho$ to another density matrix. 
\end{remark}

\begin{proof}
	To prove \lref{lem:SymUnique}, we need to prove the equality $C(\tilde{\rho})=C(\rho)$ whenever the set of absolute values of off-diagonal entries of 
	$\tilde{\rho}$ is the same as that of  $\rho$ given any pair of density matrices $\rho$ and $\tilde{\rho}$. We may assume that $\rho$ and $\tilde{\rho}$ act on the $d$-dimensional Hilbert space $\caH$ with $d\geq 2$, so that 
	$C_{l_1}(\tilde{\rho})=C_{l_1}(\rho)\leq d-1$ \cite{BaumCP14,StreAP17}.

	Since $C(\rho)$ is continuous, without loss of generality, we may assume that all phase factors  $\rho_{jk}/|\rho_{jk}|$ are $L$th roots of unity for some positive integer $L$, that is, 
	\begin{equation}
		\rho_{jk}=|\rho_{jk}|\rme^{2\pi\rmi m_{jk}/L},
	\end{equation}
	where $m_{jk}$ are integers. When $\rho_{jk}=0$, the value of $\rho_{jk}/|\rho_{jk}|$ is irrelevant to the following discussion. Similarly, we assume that all $\tilde{\rho}_{jk}/|\tilde{\rho}_{jk}|$ are $\tilde{L}$th roots of unity. To prove the lemma, we shall construct a quantum state that has the same coherence measure as both $\rho$ and $\tilde{\rho}$.

	Let $K$ be the smallest integer that is divisible by $L, \tilde{L}$ and satisfies $K\geq 2d$. 
	Let $\omega=\rme^{2\pi\rmi/K}$ be a primitive $K$th root of unity. Define $G$ as  the group composed of all $d\times d$ diagonal matrices whose diagonal entries are $K$th roots of unity (that is, integer powers of $\omega$). Note that $G$ has order $K^d$. 
	Define 
	\begin{equation}
		\rho_2:=\frac{1}{K^d}U_G(\id_{K^d}\otimes \rho)U_G^\dagger,
	\end{equation}
	where $\id_{K^d}$ denotes the identity on the $K^d$-dimensional Hilbert space and 
	\begin{equation}
		U_G:=\bigoplus_{U\in G} U.
	\end{equation}
	Then $C(\rho_2)=C(\rho)$ given that $\id_{K^d}/K^d$ is an incoherent density matrix and $U_G$ is a diagonal incoherent unitary. Moreover,  $\rho_2$ has off-diagonal entry $|\rho_{jk}|\omega^a/K^d$ with multiplicity $K^{d-1}$ for each $j,k=0, 1, \ldots, d-1$ with $j\neq k$ and $a=0, 1,\ldots, K-1$. 
	
	Define
	\begin{align}
		|\psi\>&:=\frac{1}{\sqrt{K}} \sum_{a=0}^{K-1} \omega^a|a\>, \quad |\varphi\>:=\frac{1}{\sqrt{K}} \sum_{a=0}^{K-1} |a\>.  
	\end{align}
	Note that $|\psi\>\<\psi|$ has $K$ off-diagonal entries equal to $\omega^a/K$ for each $a=1,2.\ldots, K-1$. In addition, 
	$|\psi\>$ can be mapped to $|\varphi\>$ by the diagonal incoherent  unitary transformation
	\begin{equation}
		V:=\sum_{a=0}^{K-1} \omega^{-a}|a\>\<a|.
	\end{equation}
	
	In order to construct the desired density matrix, we  need to introduce the positive operator
	\begin{equation}
		M:=\frac{1}{K^{d-1}}\bigoplus_{j<k}|\rho_{jk}|(M_1\oplus M_2\oplus M_3),
	\end{equation}
	with
	\begin{equation}
		\begin{aligned}
			M_1&:= \id_{2K^{d-2}}\otimes (|\psi\>\<\psi|),\\
			M_2&:= \id_{\frac{2(K^{d-2}-1)}{K-1}}\otimes (|\varphi\>\<\varphi|),\\
			M_3&:=\frac{1}{K}\id_K\otimes(|0\>\<0|+|0\>\<1|+|1\>\<0|+|1\>\<1|).
		\end{aligned}
	\end{equation}
	When all $\rho_{jk}$ are nonzero, $M$ acts on a Hilbert space of dimension
	\begin{equation}
		d(d-1)\left(K^{d-1}+\frac{K(K^{d-2}-1)}{K-1}+K\right), 
	\end{equation}
	and it has rank
	\begin{equation}
		d(d-1)\left(K^{d-2}+\frac{K^{d-2}-1}{K-1}+\frac{K}{2}\right).
	\end{equation}
	In general, the trace of $M$ is upper bounded by 1 according to the following equation
	\begin{align}
		\tr(M)&=\left(\frac{1}{K} +\frac{K^{d-2}-1}{(K-1)K^{d-1}}+\frac{1}{K^{d-1}}\right)\sum_{jk}|\rho_{jk}|\nonumber\\
		&=\left(\frac{1}{K} +\frac{K^{d-2}-1}{(K-1)K^{d-1}}+\frac{1}{K^{d-1}}\right)C_{l_1}(\rho)\nonumber\\
		&\leq\left(\frac{1}{K}+ \frac{K^{d-2}-1}{(K-1)K^{d-1}}+\frac{1}{K^{d-1}}\right)(d-1)<1, 
	\end{align}
	where we have employed the inequality $C_{l_1}(\rho)\leq d-1$ \cite{BaumCP14,StreAP17} as well as the assumptions  $d\geq 2$ and  $K\geq 2d$.
	
	Calculation shows that the set of off-diagonal entries of $M$ is the same  as that of $\rho_2$. To verify this claim, here we assume that  all $\rho_{jk}$ with $j<k$ have different nonzero absolute values, although this assumption is not essential.
	Note that every nonzero off-diagonal entry of $\rho_2$ is equal to $|\rho_{jk}|\omega^a/K^d$
	for $a=0,1,\ldots, K-1$ and $j,k=0,1,\ldots, d-1$ with $j<k$. The same is true for $M$.
	In addition, the number of off-diagonal entries of $\rho_2$ that are equal to $|\rho_{jk}|\omega^a/K^d$ for given $j,k,a$ is  $2K^{d-1}$ (note that $|\rho_{jk}|=|\rho_{kj}|$). The counterpart for $M$ is 
	\begin{gather}
		\frac{2(K^{d-2}-1)}{K-1}\times K(K-1)+2K=2K^{d-1}
	\end{gather}
	when $a=0$ and
	\begin{gather}
		2K^{d-2}\times K=2K^{d-1}
	\end{gather}
	when  $a=1,2,\ldots, K-1$. The numbers are equal in both cases, from which our claim follows.

	Define the density matrix
	\begin{equation}
		\rho_3:=M\oplus(1-\tr M). 
	\end{equation}
	Then  the set of off-diagonal entries of $\rho_3$ is the same as that of $M$ and that of $\rho_2$. It follows from the assumption that 
	$C(\rho_3)=C(\rho_2)=C(\rho)$.

	Now we are ready to construct the desired density matrix. Let
	\begin{equation}
		\begin{aligned}
			M_1'&:=\id_{2K^{d-2}}\otimes (|\varphi\>\<\varphi|),\\
			M'&:=\frac{1}{K^{d-1}}\bigoplus_{j<k}|\rho_{jk}|(M_1'\oplus M_2\oplus M_3),\\
			\rho_4&:=M'\oplus(1-\tr M').
		\end{aligned}
	\end{equation}
	Then  $\rho_4$ and $\rho_3$ can be mapped to each other by a diagonal unitary transformation given that $|\psi\>$ and $\varphi\>$ can be mapped to each other by the diagonal unitary $V$ and that $\tr M=\tr M'$. Incidentally, each matrix element of $\rho_4$ is the absolute value of the corresponding matrix element of $\rho_3$.  Therefore,  $C(\rho_4)=C(\rho_3)=C(\rho_2)=C(\rho)$. 
	
	By the same reasoning, we have $C(\rho_4)=C(\tilde{\rho})$, given that  the set of absolute values of off-diagonal entries of 
	$\tilde{\rho}$ is the same as that of  $\rho$. 
	It follows that $C(\tilde{\rho})=C(\rho)$, which completes the proof of the lemma.	
\end{proof}

\begin{lemma}\label{lem:l1Unique}
	Suppose $C(\rho)$ is a weak coherence monotone that
	is a symmetric function of the absolute values of nonzero 
	off-diagonal entries of $\rho$. Then $C(\rho)$ is a nondecreasing function of $C_{l_1}(\rho)$. If in addition it is additive, then $C(\rho)=c\, C_\rmL(\rho)$ for some constant $c\geq0$. 
\end{lemma}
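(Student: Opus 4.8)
The plan is to show that $C$ is constant on each level set of $C_{l_1}$, nondecreasing between them, and then to solve the resulting functional equation in the additive case.

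First I would record the elementary consequences of the hypotheses. Being a weak monotone, $C$ is nonincreasing under every incoherent channel; hence it is invariant under incoherent (monomial) unitaries and, using the standing assumption $C(\rho\otimes\sigma)=C(\rho)$ for incoherent $\sigma$, also invariant under adjoining an incoherent ancilla and under padding $\rho$ with basis states on which it has no support. It is moreover nonincreasing under partial dephasing, which multiplies every off-diagonal entry (and thus $C_{l_1}$) by a factor $p\in[0,1]$. The workhorses are two kinds of $C_{l_1}$-\emph{preserving} incoherent operations. (a) Taking $\sigma=\id_k/k$ already gives, for free, $C(\rho)=C\!\big(\frac1k\bigoplus_{a=1}^{k}\rho^{(a)}\big)$, where the $\rho^{(a)}$ are copies of $\rho$ supported on mutually disjoint sets of basis states; more generally, a short ``dilute-and-relabel'' channel (strictly incoherent, built from scaled generalized permutations) followed by a companion two-Kraus-operator channel that coherently recombines a chosen sub-collection of the copies yields the exact identity $C(\rho)=C\!\big(\bigoplus_i p_i\rho^{(i)}\big)$ for arbitrary weights $p_i$ (rational ones directly, general ones by continuity); every off-diagonal entry is merely rescaled and relabelled here, so $C_{l_1}$ is unchanged. (b) If $\mu$ has only non-negative real matrix entries, then permutation twirling $\mu\mapsto\frac1{d!}\sum_\pi P_\pi\mu P_\pi^{\dagger}$ is an incoherent channel with no cancellations among off-diagonals, so it sends $\mu$ to the uniform state on $\caH$ of dimension $d$ whose diagonal equals $1/d$ and whose off-diagonals all equal $C_{l_1}(\mu)/(d(d-1))$, preserving $C_{l_1}$.

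The crux, and the step I expect to be the main obstacle, is to combine (a), (b), partial dephasing, and the hypothesis that $C$ depends only on the multiset of absolute values of the nonzero off-diagonal entries, in order to prove $C(\rho)=C(\tilde\rho)$ whenever $C_{l_1}(\rho)=C_{l_1}(\tilde\rho)$. By that hypothesis, together with the construction of Lemma~\ref{lem:SymUnique} (which realizes any prescribed multiset of off-diagonal absolute values by a state with non-negative real entries), one may assume $\rho$ and $\tilde\rho$ have non-negative real entries; embedding both in a common large space and twirling sends each to the \emph{same} uniform state $\Omega$, so $C(\rho),C(\tilde\rho)\ge C(\Omega)$. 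The hard part is upgrading this to an equality: one has to chain the splitting and recombining moves of (a) with the dephasing and twirling steps (and invoke the continuity hypothesis inherited from Theorem~\ref{thm:l1Unique}) so as to connect any two states of equal $C_{l_1}$ through transformations that leave $C$ invariant; making this rigorous while controlling the growth of the ancilla dimension is the technical heart of the argument. The outcome is a well-defined function $g$ with $C(\rho)=g(C_{l_1}(\rho))$, and $g$ is nondecreasing because partial dephasing of any state of $C_{l_1}$-value $t$ realizes every value in $[0,t]$.

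For the additive case, the standing assumption forces $g(0)=0$, and additivity combined with the multiplicativity $1+C_{l_1}(\rho\otimes\sigma)=\big(1+C_{l_1}(\rho)\big)\big(1+C_{l_1}(\sigma)\big)$ gives $g\big((1+u)(1+v)-1\big)=g(u)+g(v)$ for all $u,v\ge0$. Writing $h(x):=g(2^{x}-1)$ turns this into Cauchy's equation $h(x+y)=h(x)+h(y)$ on $[0,\infty)$; since $h$ is nondecreasing, $h(x)=cx$ for some $c\ge0$, i.e.\ $g(t)=c\log(1+t)$ and hence $C(\rho)=c\,C_{\rmL}(\rho)$. Finally, Theorem~\ref{thm:NegUnique} follows from the same scheme, replacing off-diagonal entries of $\rho$ by negative eigenvalues of $\rho^{\rmT_\rmA}$, incoherent operations by LOCC, and the twirling step of (b) by the analogous symmetrization of the negative part of the partial transpose.
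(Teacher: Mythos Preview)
Your outline identifies the correct structure (constant on $C_{l_1}$-level sets, then nondecreasing, then the Cauchy equation for the additive part), and your treatment of the additive case is fine and essentially equivalent to the paper's. However, the central step---proving $C(\rho)=C(\tilde\rho)$ whenever $C_{l_1}(\rho)=C_{l_1}(\tilde\rho)$---is left as a gap. You yourself flag that ``upgrading this to an equality'' is the technical heart, and your sketch (chain splitting/recombining with twirling and dephasing) does not actually close it: twirling only yields $C(\rho),C(\tilde\rho)\geq C(\Omega)$, and there is no incoherent operation sending the uniform state $\Omega$ back to a state with the off-diagonal profile of $\rho$ while preserving $C_{l_1}$. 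Your splitting/recombining move (a) acts within the orbit of \emph{copies of a fixed} $\rho$ and does not connect states with different off-diagonal multisets. Moreover, you invoke ``the continuity hypothesis inherited from Theorem~\ref{thm:l1Unique}'', but \lref{lem:l1Unique} is stated \emph{without} continuity (this is precisely the point of the Remark following it), so that crutch is not available.

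The paper bypasses all of this with a one-line tensor trick that you are missing. If the nonzero off-diagonal absolute values of $\rho$ are $a_0,\dots,a_r$ and those of $\tilde\rho$ are $b_0,\dots,b_s$, with $\sum_i a_i=\sum_j b_j=C_{l_1}(\rho)$, set the \emph{incoherent} states
\[
\sigma:=\frac{1}{C_{l_1}(\rho)}\diag(a_0,\dots,a_r),\qquad
\tilde\sigma:=\frac{1}{C_{l_1}(\rho)}\diag(b_0,\dots,b_s).
\]
Then the multiset of off-diagonal absolute values of $\rho\otimes\tilde\sigma$ is $\{a_i b_j/C_{l_1}(\rho)\}_{i,j}$, identical to that of $\tilde\rho\otimes\sigma$. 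By the symmetric-function hypothesis $C(\rho\otimes\tilde\sigma)=C(\tilde\rho\otimes\sigma)$, and since $\sigma,\tilde\sigma$ are incoherent, $C(\rho)=C(\tilde\rho)$. No twirling, no recombining, no continuity. Monotonicity then follows immediately: if $C_{l_1}(\tilde\rho)\geq C_{l_1}(\rho)$, pick any $\sigma'$ with $C_{l_1}(\rho\otimes\sigma')=C_{l_1}(\tilde\rho)$; then $C(\tilde\rho)=C(\rho\otimes\sigma')\geq C(\rho)$ by tracing out the ancilla. I would replace your entire ``crux'' paragraph with this argument.
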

\begin{remark}
	The continuity assumption appearing in \lref{lem:SymUnique} is not required in \lref{lem:l1Unique}. 
\end{remark}

\begin{proof}
	Let  $\rho$ and $\tilde{\rho}$ be two arbitrary quantum states that satisfy  $C_{l_1}(\rho)= C_{l_1}(\tilde{\rho})$.
	If $C_{l_1}(\rho)=0$, then $\rho,\tilde{\rho}$ have no nonzero off-diagonal entries, so  $C(\rho)=C(\tilde{\rho})$ by assumption. Otherwise, suppose the absolute values of nonzero off-diagonal entries of  $\rho$ are given by $a_0, a_1, \ldots a_r$, and those of  $\tilde{\rho}$ are  given by  $b_0, b_1, \ldots b_s$.
	Let  
	\begin{equation}
		\begin{aligned}
			\sigma&\rep\frac{1}{C_{l_1}(\rho)}\diag(a_0, a_1, \ldots a_r),\\ \tilde{\sigma}&\rep\frac{1}{C_{l_1}(\rho)}\diag(b_0, b_1, \ldots b_s);
		\end{aligned}
	\end{equation}
	then $\sigma,\tilde{\sigma}$ represent incoherent quantum states.  In addition, the set of absolute values of nonzero off-diagonal entries of  $\rho\otimes \tilde{\sigma}$ is the same as that of  $\tilde{\rho}\otimes \sigma$. Therefore, $C(\rho\otimes 
	\tilde{\sigma})=C(\tilde{\rho}\otimes \sigma)$
	by assumption, which further implies that $C(\rho)=C(\tilde{\rho})$ given that $\sigma,\tilde{\sigma}$ are incoherent. This conclusion shows that $C(\rho)$ is a  function of $C_{l_1}(\rho)$ and thus also a function of $C_{\rmL}(\rho)$.
	
	Suppose $C_{l_1}(\tilde{\rho})\geq C_{l_1}(\rho)$, then we can find a state $\sigma'$ such that $C_{l_1}(\rho\otimes \sigma')=C_{l_1}(\tilde{\rho})\geq C_{l_1}(\rho)$. It follows that $C(\tilde{\rho})=C(\rho\otimes \sigma')\geq C(\rho)$, given that $C$ is a function of $C_{l_1}$ and that it is monotonic under incoherent operations. Therefore, $C(\rho)$ is a nondecreasing function of $C_{l_1}(\rho)$.

	If $C$ is also additive,  then $C(\rho^{\otimes k})=kC(\rho)$ for any $\rho$ and any positive integer $k$, which implies that $C(\rho)\geq 0$.
	If $\rho$ is incoherent, then  we have $2C(\rho)=C(\rho^{\otimes2})=C(\rho)$, which implies that $C(\rho)=0$. 
	Choose   a density matrix $\bar{\rho}$ with $C_\rmL(\bar{\rho})=1$; 
	if $C$ is not identically zero, multiply $C$ by a positive constant if necessary, we may assume that $C(\bar{\rho})=1$. Consequently $C(\bar{\rho}^{\otimes k})=k=C_\rmL(\bar{\rho}^{\otimes k})$. Therefore, $C(\rho)=C_\rmL(\rho)$ whenever $C_\rmL(\rho)$ is an integer.

	Given any positive rational number $n/m$ as the ratio of two positive integers, we can construct a quantum state $\varrho$ with $C_\rmL(\varrho)=n/m$, so that $C_\rmL(\varrho^{\otimes m})=n$. Therefore, $m C(\varrho)=C(\varrho^{\otimes m})=C_\rmL(\varrho^{\otimes m})=n$. It follows that $C(\rho)=C_\rmL(\rho)$ whenever $C_\rmL(\rho)$ is a rational number.
	Finally, the restriction to the set of rational numbers can be eliminated given that  $C(\rho)$ is nondecreasing in $C_\rmL(\rho)$ and coincides with the latter on a dense subset of the ray of nonnegative real numbers.
	Interestingly, $C$ is automatically continuous and strongly monotonic given the extra additivity assumption.
	This observation completes the proof of \lref{lem:l1Unique}. 
\end{proof}

\section{Proof of \thref{thm:Negl1}\label{app:Negl1Proof}}

\begin{proof}
	The bound $\mathcal{N}(\rho)\leq C_{l_1}(\rho)$ follows from the  equation below,
	\begin{align}
		\mathcal{N}(\rho)+1&=\tr|\rho^{\rmT_\rmA}|=\|\rho^{\rmT_\rmA}\|_1\leq \|\rho^{\rmT_\rmA}\|_{l_1}=\|\rho\|_{l_1}\nonumber\\
		&=C_{l_1}(\rho)+1,
	\end{align}
	where the inequality follows from \lref{lem:Negl1bound} in \aref{app:Tracel1}. 
	
	Suppose  $\rho^{\rmT_\rmA}$ has the form in \eref{eq:Negl1Equal} as reproduced here,
	\begin{equation}\label{eq:Negl1EqualSupp}
		\rho^{\rmT_\rmA}=\sum_{jk} a_{jk} |jk\>\<\pi(jk)|,
	\end{equation}
	where $\pi$ is a product of disjoint transpositions. In addition, $\pi(jk)=jk$ whenever $a_{jk}=0$;  each transposition $(jk, j'k')$ satisfies $j\neq j'$ and $k\neq k'$; meanwhile, $jk'$ and $j'k$ are fixed points of $\pi$ when $(jk, j'k')$ is such a transposition.  Then the inequality  $\mathcal{N}(\rho)\leq C_{l_1}(\rho)$ is saturated according to \lref{lem:Negl1bound}, which can also be verified directly.
	
	Conversely, suppose $\mathcal{N}(\rho)=C_{l_1}(\rho)$; then we have $\|\rho^{\rmT_\rmA}\|_1= \|\rho^{\rmT_\rmA}\|_{l_1}$, so that   $\rho^{\rmT_\rmA}$ has the form in \eref{eq:Negl1Equal} according to \lref{lem:Negl1bound}, where $\pi$ is a permutation. 
	If $a_{jk}=0$, then all entries of $\rho^{\rmT_\rmA}$ in the row labeled by $jk$ are zero, and so are all entries
	in the column labeled by $jk$, given that $\rho$ and $\rho^{\rmT_\rmA}$ are Hermitian. Therefore, we may assume  $\pi(jk)=jk$ without loss of generality. Let $\tau$ be the restriction of $\pi$ on those basis states with $a_{jk}\neq 0$. 
	Recall that every permutation can be written as a product of disjoint cycles. If $\tau$ has a cycle of length at least 3, then $\rho^{\rmT_\rmA}$ cannot be Hermitian. Therefore, each cycle in the disjoint cycle decomposition has length at
	most 2; in other words, $\tau$ is a product of disjoint transpositions (including the identity, which corresponds to no transposition), and the same holds for $\pi$.

	Suppose $(jk, j'k')$ is a transposition in the disjoint cycle decomposition of $\tau$. Then $a_{j'k'}=a_{jk}^*\neq0$, and  $\rho^{\rmT_\rmA}$ has the form 
	\begin{equation}
		\rho^{\rmT_\rmA}=a_{jk}|jk\>\<j'k'|+a_{jk}^*|j'k'\>\<jk|+M,
	\end{equation}
	where $M$ is a Hermitian operator whose support is orthogonal to $|jk\>$ and $|j'k'\>$. Therefore, 
	\begin{equation}
		\tr[\rho (|jk'\>\<j'k|)]=\tr[\rho^{\rmT_\rmA} (|j'k'\>\<jk|)]=a_{jk}\neq 0,
	\end{equation}
	which  implies that $\<j'k|\rho^{\rmT_\rmA}|j'k\>= \<j'k|\rho|j'k\>>0$ given that $\rho$ is positive semidefinite. Consequently,
	$a_{j'k}> 0$ and  $\tau(j'k)=j'k$; by the same token  $a_{jk
		'}> 0$ and   $\tau(jk')
	=jk'$. In particular,  $jk'$ and $j'k$ are  fixed points of $\tau$. 
	If $j=j'$ or $k=k'$, then $\tau(jk)=jk$, which contradicts the assumption that $\tau$ exchanges $jk$ and $j'k'$. Therefore, $j\neq j'$ and $k\neq k'$ whenever $(jk, j'k')$ is a transposition in the disjoint cycle decomposition of $\tau$ or $\pi$.
\end{proof}

\section{Connection between the trace norm and $l_1$-norm\label{app:Tracel1}}
Here we prove that the $l_1$-norm is an upper bound for the trace norm and determine the condition for saturating this bound. This result is needed to prove \thref{thm:Negl1}. 
\begin{lemma}\label{lem:Negl1bound}
	Every matrix $X$ satisfies $\|X\|_1\leq \|X\|_{l_1}$; the inequality is saturated iff $X$ has the following form
	\begin{equation}\label{eq:norm1}
		X=\sum_j s_j |j\>\< \pi(j)|,
	\end{equation} 
	where $s_j$ are complex numbers, and $\pi$ is a permutation.
\end{lemma}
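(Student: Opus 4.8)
The plan is to deduce the inequality from the triangle inequality for the trace norm, and then to characterize equality through the variational description $\|X\|_1=\max_U\operatorname{Re}\tr(UX)$, the maximum running over unitaries.

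First, expanding $X=\sum_{jk}X_{jk}\ketbra{j}{k}$ and using the triangle inequality for $\|\cdot\|_1$ together with $\|\ketbra{j}{k}\|_1=1$ gives $\|X\|_1\le\sum_{jk}|X_{jk}|=\|X\|_{l_1}$ at once. The \emph{if} half of the equality condition is then a short computation: for $X=\sum_j s_j\ketbra{j}{\pi(j)}$ with $\pi$ a permutation one has $X^\dagger X=\sum_j|s_j|^2\,\proj{\pi(j)}$, so the singular values of $X$ are the numbers $|s_j|$ and $\|X\|_1=\sum_j|s_j|$; since each row of $X$ carries exactly one nonzero entry, $\|X\|_{l_1}=\sum_j|s_j|$ as well, so the bound is saturated.

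For the \emph{only if} direction I would choose a unitary $U$ attaining $\operatorname{Re}\tr(UX)=\|X\|_1$ — for instance $U=V^\dagger$ obtained from a polar decomposition $X=V|X|$, after completing the partial isometry $V$ to a unitary when $X$ is not invertible. Writing $\tr(UX)=\sum_{jk}U_{jk}X_{kj}$ and using $|U_{jk}|\le1$, the hypothesis $\|X\|_1=\|X\|_{l_1}$ forces $\operatorname{Re}(U_{jk}X_{kj})=|X_{kj}|$ for every pair $(j,k)$, hence $|U_{jk}|=1$ whenever $X_{kj}\ne0$. Since $U$ is unitary, a modulus-one entry is the only nonzero entry in its row and in its column. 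Consequently, if some row $k_0$ of $X$ had two nonzero entries $X_{k_0 j_1},X_{k_0 j_2}$ with $j_1\ne j_2$, then column $k_0$ of $U$ would contain the two modulus-one entries $U_{j_1 k_0},U_{j_2 k_0}$, a contradiction; the analogous argument with rows of $U$ rules out two nonzero entries in any column of $X$. Thus $X$ has at most one nonzero entry per row and per column, so its nonzero rows biject with its nonzero columns; extending this partial assignment to a full permutation $\pi$ and setting $s_j:=X_{j\pi(j)}$ (which vanishes on the zero rows) puts $X$ in the form \eref{eq:norm1}.

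The points that will require the most care are the justification that the variational maximum is attained by a genuine unitary even when $X$ is rank-deficient (handled by completing the polar partial isometry, or by noting that partial isometries already have entries of modulus at most $1$), and keeping the index transpose between $U$ and $X$ straight so that ``at most one nonzero entry per row/column of $U$'' is correctly transported to the same statement for $X$. I would state the lemma for operators on a Hilbert space (square matrices), which is all that \thref{thm:Negl1} requires; the rectangular case reduces to this by padding with zero rows or columns.
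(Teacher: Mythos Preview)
Your argument is correct and complete. It differs from the paper's approach in both halves. For the inequality, the paper argues via majorization: setting $u=(|X_{jk}|^2)_{jk}$, $v=\diag(X^\dagger X)$, $w=\eig(X^\dagger X)$, one has $u\prec v\prec w$, and Schur concavity of $w\mapsto\sum_j\sqrt{w_j}$ yields $\|X\|_1=\sum_j\sqrt{w_j}\le\sum_{jk}\sqrt{u_{jk}}=\|X\|_{l_1}$; for the equality condition the paper invokes \emph{strict} Schur concavity to force $u\simeq w$, so that the number of nonzero entries of $X$ equals its rank and each row and column carries at most one nonzero entry. Your route---triangle inequality for the bound, and the dual characterization $\|X\|_1=\max_U\operatorname{Re}\tr(UX)$ combined with the unit-column/row constraint of unitaries for the saturation case---is more elementary and self-contained, avoiding any appeal to majorization theory; the paper's approach, on the other hand, delivers the extra structural information $u\simeq v\simeq w$ in one stroke and ties the lemma to standard matrix-analysis machinery. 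Either proof is perfectly adequate for the application to \thref{thm:Negl1}, and your remarks on completing the polar partial isometry and on the rectangular case via zero-padding are exactly the right caveats.
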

Note that the permutation $\pi$ in \eref{eq:norm1} may be subjected to additional constraints so as to comply with the numbers of rows and columns of $X$.  
The equality condition in \eref{eq:norm1} is closely related to the singular-value decomposition of $X$. After deriving \lref{lem:Negl1bound}, we realized  that the inequality $\|X\|_1\leq \|X\|_{l_1}$ follows from  Theorem~3.32 in   \rcite{Zhan02book}, but
the equality condition has not been discussed before as far as we know.
\begin{proof}
	Our proof is based on 
	the theory of majorization \cite{MarsOA11book,Bhat97book,Niel99,ZhuMCF17}. Let $x=(x_0, x
	_1, \ldots, x_{d-1})^\rmT$ and 
	$y=(y_0, y_1, \ldots, y_{d-1})^\rmT$ be two $d$-dimensional real vectors.  The vector $x$ is
	majorized by $y$ if
	\begin{equation}
		\sum_{j=0}^k x^\downarrow_j\leq \sum_{j=0}^k y^\downarrow_j \quad \forall k=0, 1, \dots, d-1,
	\end{equation}
	and the  inequality is saturated when $k=d-1$.  Here $x^\downarrow$ is constructed from $x$ by arranging its components in decreasing order. This relation is 
	written as $x\prec y$ or $y\succ x$ henceforth.
	It can also be defined for  vectors of different dimensions, in which case we  implicitly 
	add  a number of  "0" to the vector with few components so as to match the other vector.

	Suppose $X= \sum_{j,k}X_{jk}|j\rangle \langle k|$. Define column vectors
	$u=(|X_{jk}|^2)_{jk}$, $v=\diag(X^\dag X)$, and $w=\eig(X^\dag X)$, where $\eig(X^\dag X)$ denotes the vector composed of the eigenvalues of $X^\dag X$. Then it is straightforward  to verify  that 
	\begin{equation}\label{eq:uvw}
		u\prec v\prec w,
	\end{equation}
	where the second inequality is well known in matrix analysis \cite{Bhat97book}.
	Consequently,
	\begin{align}
		\|X\|_1&=\sum_j \sqrt{w_j}
		\leq \sum_j\sqrt{v_j}\leq \sum_{jk} \sqrt{u_{jk}}\nonumber\\
		&=\sum_{jk} |X_{jk
		}|=\|X\|_{l_1}.
	\end{align}
	Here the two inequalities follow from \eref{eq:uvw} and the fact that the function $\sum_j  \sqrt{w_j}$ is Schur concave in $w$. Actually, the function is strictly Schur concave, so $w\simeq v\simeq u$ whenever  the two inequalities are saturated, which amounts to the equality $\|X\|_1=\|X\|_{l_1}$. Here $v\simeq u$ means $v\prec u$ and $u\prec v$; that is, $u$ and $v$ have the same nonzero components up to a permutation. In that case, the number of  nonzero entries of 
	$X$  is equal to the number of nonzero singular values, that is, the rank of $X$. In addition, each row of $X$ has at most one nonzero entry, and so does each column. Consequently, $X$  has the form of \eref{eq:norm1}, in which case  the inequality $\|X\|_1\leq\|X\|_{l_1}$ is indeed saturated.
\end{proof}

\subsection{Alternative proof of the inequality $\|X\|_1\leq \|X\|_{l_1}$}

\begin{proof}
	Let $X=\sum_j s_j |u_j\>\<v_j|$ be the singular-value decomposition of $X$, where $s_j>0$ are singular values of $X$, $|u_j\>$ are orthonormal, and so are $|v_j\>$. Then 
	\begin{align}
		&\|X\|_1=\sum_j s_j=\sum_j |\<u_j| X |v_j\>|=
		\sum_j \biggl|\sum_{k,l} u_{jk}^* X_{kl}v_{jl}\biggr|\nonumber\\
		&\leq \sum_{j,k,l} |X_{kl}| |u_{jk}^* v_{jl}|
		\leq \frac{1}{2}\sum_{j,k,l} |X_{kl}| \left(|u_{jk}|^2+ |v_{jl}|^2\right)\nonumber\\
		&\leq \sum_{k,l} |X_{kl}|= \|X\|_{l_1}.
	\end{align}
	Here the third inequality follows from  the fact that
	$\sum_j |u_{jk}|^2\leq 1$ and $\sum_j |v_{jl}|^2\leq 1$, given that $|u_j\>$ and  $|v_j\>$ are respectively orthonormal.
\end{proof}

\section{Proof of \pref{pro:N0l0}\label{app:N0l0Proof}}
\begin{proof}
	Note that  $C_{l_0}(\rho^{\rmT_\rmA})=C_{l_0}(\rho)$ is an even number and that the diagonal entries of $\rho^{\rmT_\rmA}$ are nonnegative. Let $r=C_{l_0}(\rho)/2$, then $\rho^{\rmT_\rmA}$ can be written as follows
	\begin{equation}
		\rho^{\rmT_\rmA}=M_0+\sum_{j=1}^r M_j,
	\end{equation}
	where $M_0$ is diagonal and positive semidefinite, while each $M_j$ for $j=1,2,\ldots, r$ is a Hermitian matrix with rank 2 and one negative eigenvalue. 
	Therefore, $\rho^{\rmT_\rmA}$ has at most $r$ negative eigenvalues, which implies the desired  inequality $C_{l_0}(\rho)\geq 2\caN_0(\rho)$. 
\end{proof}

\section{Proof of \thref{thm:PairingDistill}\label{app:PairingDistillProof}}
\begin{proof}
	Suppose $\rho$ is an entangled  pairing state, which has the form of \eref{eq:Negl1Equal}. Then $\rho$ cannot be diagonal and is thus NPT according to \thref{thm:Negl1}. 
	Suppose $(jk, j'k')$ is a transposition in the disjoint cycle decomposition of $\pi$. Then $j\ne j'$, $k\ne k'$, $a_{j'k'}=a_{jk}^*\neq0$, $a_{j'k}a_{jk'}\neq0$, and  
	\begin{align}
		\rho^{\rmT_\rmA}
		&= a_{jk}|jk\>\<j'k'|+a_{jk}^*|j'k'\>\<jk|\nonumber
		\\
		&\quad+ a_{j'k}\proj{j'k}
		+a_{jk'}\proj{jk'}
		+M,
	\end{align}
	where $M$ is a Hermitian operator whose support is orthogonal to the four kets $|jk\>$, $|jk'\>$, $|j'k\>$, and $|j'k'\>$; cf.~the proof of \thref{thm:Negl1}. Let 
	\begin{equation}
P=(\proj{j}+\proj{j'})\otimes(\proj{k}+\proj{k'})
	\end{equation}
be a local projector. Then 
	\begin{align}
		&(P\rho P)^{\rmT_\rmA}=P\rho^{\rmT_\rmA}P=a_{jk}|jk\>\<j'k'|+a_{jk}^*|j'k'\>\<jk|\nonumber
		\\
		&\quad+ a_{j'k}\proj{j'k}
		+a_{jk'}\proj{jk'},
	\end{align}
	so 	$P\rho P$ is  a subnormalized two-qubit entangled  maximally correlated state, which is distillable according to \rscite{Rain99,HoroHH97}. 
	It follows that  any entangled pairing state is both NPT and distillable. 
\end{proof}

\section{Proof of \thref{thm:PairingNumMax}\label{app:PairingNumMaxProof}}
\begin{proof}
	Note that $\rho$ and  $\rho^{\rmT_\rmA}$ have the same numbers of nonzero diagonal entries and  off-diagonal entries. 
	Suppose $\rho$ is a canonical pairing state with pairing number $m$ (without loss of generality).  Then both $\rho$ and $\rho^{\rmT_\rmA}$ have $2m$ nonzero off-diagonal entries; meanwhile, the number of nonzero entries is at most $d_\rmA^2$ according to \thref{thm:Negl1}. Suppose $\rho$ has $n$ nonzero diagonal entries. 
	Then $2m\leq n(n-1)$ and $2m+n\leq d_\rmA^2$, which imply that $m\leq d_\rmA(d_\rmA-1)/2$. If the inequality is saturated, then $n=d_\rmA$, and $\rho$ has the following form
	\begin{equation}
		\rho=\sum_{r,s=0}^{d_\rmA-1}c_{rs}|j_r k_r\>\<j_s k_s|,
	\end{equation}
	where $c_{rs}\neq 0$ for all $r,s$, and $(j_s, k_s)\neq(j_r, k_r)$ whenever $s\neq r$. Now \thref{thm:Negl1} further implies  that $j_s\neq j_r$ and $k_s\neq k_r$ whenever $s\neq r$, so that $\rho$ is a canonical maximally correlated state.
\end{proof}

\section{Proof of \crref{cor:Negl1QubitQudit}\label{app:Negl1QubitQuditProof}}

\begin{proof}
	If $\rho$ is given  in \eref{eq:rhoQubitQudit}, then $\rho^{\rmT_\rmA}=\bigoplus_{j\geq 0}^t p_j \rho_j^{\rmT_\rmA}$, so that $\caN(\rho)=\sum_j p_j\caN(\rho_j)=\sum_j p_j C_{l_1}(\rho_j)=C_{l_1}(\rho)$. 
	
	If $\rho$ is a canonical pairing state,  which saturates the inequality $\mathcal{N}(\rho)= C_{l_1}(\rho)$, then  $\rho$ can be written as follows  according to \thref{thm:Negl1},
	\begin{align}
		&\rho^{\rmT_\rmA}=\sum_{s}\bigl( a_s |0j_s\>\< 0 j_s|+ b_s|0k_s\>\< 1 j_s|+ b_s^* |1 j_s\>\< 0 k_s|	\nonumber\\
		&+c_s  |1 k_s\>\< 1 k_s|\bigr)+\sum_{r}\bigl(a_r' |0 j_r'\>\<0j_r'|+c_r'|1 k_r'\>\<1 k_r'|\bigr),
	\end{align}	
	where all $j_s, k_s, j_r', k_r'$ are distinct. Therefore, 	$\rho$ has the form of \eref{eq:rhoQubitQudit} with constraints as specified.
\end{proof}	
\Crref{cor:Negl1QubitQudit} fails when $\dim(\caH_\rmA)\geq3$. One counterexample is
\begin{equation}
	\rho=\frac{1}{4}(\proj{02}+\proj{20}+\proj{\psi}+\proj{\varphi}),
\end{equation}
where $\ket{\psi}=(\ket{00}+\ket{11})/\sqrt{2}$ and $\ket{\varphi}=(\ket{11}+\ket{22})/\sqrt{2}$.

\section{Proof of \thref{thm:EoFAddpair}\label{app:EoFAddpairProof}}
Suppose $\rho$ is a canonical pairing state on $\caH_\rmA\otimes \caH_\rmB$, which has the form of \eref{eq:rhoQubitQudit}.  Let $n$ be an arbitrary positive integer and $\tilde{\rho}$  an arbitrary bipartite state shared by Alice and Bob. Then \thref{thm:EoFAddpair} follows from \esref{eq:REEpair} to \eqref{eq:EoFpairAddG} below.
\begin{gather}
	E_\rmr(\rho)=C_\rmr(\rho)=S(\rho^{\diag})-S(\rho), \label{eq:REEpair} \\
	E_\rmr^\infty(\rho)=\frac{1}{n}E_\rmr(\rho^{\otimes n})=E_\rmr(\rho), \label{eq:REEpairAdd} \\
	C_\rmD(\rho)=E_\rmD(\rho)=E_\rmr(\rho), \label{eq:DEpair}\\
	C_\rmF(\rho)=E_\rmF(\rho)=\sum_{j} p_j E_\rmF(\rho_j)\nonumber\\
	=\sum_{j} p_j H\Bigl(\frac{1+\sqrt{1-\caN(\rho_j)^2}}{2}\Bigr), \label{eq:EoFpair}\\
	C_\rmC(\rho)=E_\rmC(\rho)=\frac{1}{n}E_\rmF(\rho^{\otimes n})=E_\rmF(\rho), \label{eq:EoFpairAdd} \\
	E_\rmF(\rho\otimes \tilde{\rho})=E_\rmF(\rho)+E_\rmF(\tilde{\rho}). \label{eq:EoFpairAddG}
\end{gather}
Here $\rho^{\diag}$ is the diagonal matrix with the same diagonal as $\rho$. Note that the formula $C_\rmr(\rho)=S(\rho^{\diag})-S(\rho)$ applies to all quantum states \cite{BaumCP14}. In addition,
\begin{equation}
	E_\rmr^\infty(\rho):=\lim_{k\to\infty}\frac{1}{k}E_\rmr(\rho^{\otimes k}),\quad E_\rmC(\rho):=\lim_{k\to\infty}\frac{1}{k}E_\rmF(\rho^{\otimes k}),
\end{equation}
where $E_\rmr^\infty(\rho)$ is known as the regularized relative entropy of entanglement. 

If the parameter $p_0$ characterizing  $\rho$ in \eref{eq:rhoQubitQudit} vanishes, then $\rho$ is a direct sum of maximally correlated states, so that $S(\rho^{\diag})=S(\rho_\rmB)$,
where $\rho_\rmB=\tr_\rmA(\rho)$. 
Consequently,
\begin{equation}\label{eq:PairingDisREE}
	C_\rmD(\rho)=C_\rmr(\rho)=E_\rmD(\rho)=E_\rmr(\rho)=S(\rho_\rmB)-S(\rho).
\end{equation}
This formula has the same form as that for a canonical maximally correlated state. Recall that any canonical maximally correlated state $\sigma$ satisfies \cite{Rain99,HoroSS03,DeveW05,WintY16,ZhuHC17}
\begin{align}
	C_\rmD(\sigma)=&C_\rmr(\sigma)=E_\rmD(\sigma)=E_\rmr(\sigma)=S(\sigma_\rmB)-S(\sigma),  \label{eq:REEMC}\\
	C_\rmC(\sigma)=&C_\rmF(\sigma)=E_\rmC(\sigma)=E_\rmF(\sigma).
\end{align}

Before proving \esref{eq:REEpair} to \eqref{eq:EoFpairAddG}, we recall that the three entanglement measures $E_\rmr, E_\rmF, E_\rmD$ do not increase on average under selective LOCC. Meanwhile, $E_\rmr, E_\rmF, E_\rmC$ are convex \cite{Horo01M, DonaHR02}. By contrast, the coherence measures $C_\rmD=C_\rmr$ and $C_\rmC=C_\rmF$ are convex and additive, and do not increase on average under selective incoherent operations \cite{Aber06,BaumCP14,WintY16}.
All eight entanglement and coherence measures mentioned above are additive for maximally correlated states.

\begin{proof}[Proof of \eref{eq:REEpair}]
	Suppose the density matrix $\rho$ has the form of  \eref{eq:rhoQubitQudit}. Denote by 	$\tilde{P}_j$ for $j>0$ the projector onto   the support of $\tr_\rmA (\rho_j)$, which  has rank at most 2. Let  $\tilde{P}_0=\id_\rmB-\sum_{j>0} \tilde{P}_j$, where $\id_\rmB$ denotes the identity on $\caH_\rmB$.  Let $P_j=\id_\rmA\otimes \tilde{P}_j$ for $j\geq 0$. Then the completely positive trace-preserving (CPTP) map defined by the set of Kraus operators $\{P_j\}$ is local and also incoherent. This map turns  $\rho$ into $\rho_j$ with probability $p_j$. Therefore, 
	\begin{align}\label{eq:REEselMonPair}
		E_\rmr(\rho)\geq \sum_{j} p_j E_\rmr(\rho_j), 
	\end{align}	
	because $E_\rmr$ does not increase on average under LOCC. On the other hand, the convexity of $E_\rmr$ implies that
	\begin{align}\label{eq:REEconvPair}
		E_\rmr(\rho)\leq \sum_{j} p_j E_\rmr(\rho_j). 
	\end{align}	
	The above reasoning still applies if $E_\rmr$ is replaced by $C_\rmr$. 		In conjunction with \eref{eq:REEMC}, we deduce that
	\begin{align}\label{eq:REEpairProof}
		E_\rmr(\rho)&= \sum_{j} p_j E_\rmr(\rho_j)= \sum_{j} p_j C_\rmr(\rho_j)=C_\rmr(\rho)\nonumber\\
		&=S(\rho^{\diag})-S(\rho),
	\end{align}
	where the last equality applies to all quantum states  \cite{BaumCP14,StreAP17}. This conclusion confirms \eref{eq:REEpair}. 
\end{proof}

\begin{proof}[Proof of \eref{eq:REEpairAdd}]
	To prove \eref{eq:REEpairAdd}, it suffices to prove the equality $E_\rmr(\rho^{\otimes n})=nE_\rmr(\rho)$. 
	To 	simplify the notation, here we illustrate the argument in the case $n=2$, which admits straightforward generalization.  
	\begin{align}
		&E_\rmr(\rho^{\otimes 2})=E_\rmr\biggl(\sum_{j_1, j_2} p_{j_1}p_{j_2} \rho_{j_1}\otimes \rho_{j_2}\biggr)\nonumber\\
		&= \sum_{j_1, j_2} p_{j_1}p_{j_2} E_\rmr(\rho_{j_1}\otimes \rho_{j_2})\nonumber\\
		&=\sum_{j_1, j_2} p_{j_1}p_{j_2}[ E_\rmr(\rho_{j_1})+E_\rmr(\rho_{j_2})]
		=2E_\rmr(\rho).
	\end{align}
	Here the second equality follows from  a similar reasoning that leads to \eref{eq:REEpairProof}. The third equality follows from the fact that $\rho_j$ is either maximally correlated or separable and that the relative entropy of entanglement is additive for maximally correlated states \cite{Rain99, ZhuHC17}.  
\end{proof}

\begin{proof}[Proof of \eref{eq:DEpair}]
	By the same reasoning that leads to \eref{eq:REEselMonPair}, we have
	\begin{align}
		E_\rmD(\rho)\geq \sum_j p_j E_\rmD(\rho_j)=\sum_j p_j E_\rmr(\rho_j)=E_\rmr(\rho), 
	\end{align}	
	note that $E_\rmD=E_\rmr$	for maximally correlated states \cite{DeveW05} and that $E_\rmr(\rho)=\sum_j p_j E_\rmr(\rho_j)$ according to \eref{eq:REEpairProof}. Since the opposite inequality $E_\rmD(\rho)\leq E_\rmr(\rho)$ holds in general, we conclude that $E_\rmD(\rho)=E_\rmr(\rho)$. 
	Similar reasoning implies that
	$C_\rmD(\rho)=C_\rmr(\rho)$, which confirms \eref{eq:DEpair} given \eref{eq:REEpair}. Incidentally, the equality $C_\rmD(\rho)=C_\rmr(\rho)$ holds for all quantum states according to \rcite{WintY16}. 
\end{proof}

\begin{proof}[Proof of \eref{eq:EoFpair}] 	According to a similar reasoning that leads to \eref{eq:REEpairProof},
	\begin{align}
		C_\rmF(\rho)&=E_\rmF(\rho)= \sum_{j} p_j E_\rmF(\rho_j)\nonumber\\
		&= \sum_j p_j H\Bigl(\frac{1+\sqrt{1-\caN(\rho_j)^2}}{2}\Bigr).
	\end{align} 
	Here the last equality follows from   Wootters' formula for the entanglement of formation of  two-qubit states \cite{Woot98} and the fact that the concurrence is equal to the negativity for any two-qubit maximally correlated state and separable state. 	
\end{proof}

\begin{proof}[Proof of \eref{eq:EoFpairAdd}] This equation follows from a similar argument that leads to \eref{eq:REEpairAdd} and is also a corollary of \eref{eq:EoFpairAddG} proved below. 
	Note that $C_\rmC=C_\rmF$ is additive in general \cite{WintY16}.  
\end{proof}

\begin{proof}[Proof of \eref{eq:EoFpairAddG}]
	According to a similar reasoning that leads to \eref{eq:REEpairProof}, we have
	\begin{align}
		E_\rmF(\rho\otimes \tilde{\rho})&= \sum_{j} p_j E_\rmF(\rho_j\otimes \tilde{\rho})= \sum_{j} p_j [E_\rmF(\rho_j)+E_\rmF(\tilde{\rho})]\nonumber\\
		&=
		E_\rmF(\rho)+E_\rmF(\tilde{\rho}),
	\end{align}	
	where the second equality follows from 	 \rscite{VidaDC02,HoroSS03} since each $\rho_j$ is either maximally correlated or separable. 
\end{proof}

\section{More examples and nonexamples of pairing states}
In this section we provide additional examples and nonexamples of pairing states by considering mixtures of a bipartite entangled pure state and the completely mixed state. 
\begin{proposition}
	Let $|\psi\>$ be a bipartite entangled pure state in $\caH_\rmA\otimes \caH_\rmB$ and $\rho=p|\psi\>\<\psi|+(1-p)/(d_\rmA d_\rmB)$ with $0\leq p\leq 1$. Then $\rho$ is a pairing state iff $p=0$ or $p=1$.
\end{proposition}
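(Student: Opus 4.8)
The plan is to dispose of $p=0$ and $p=1$ immediately and then rule out $0<p<1$. For $p=1$, $\rho=\proj{\psi}$ is a bipartite pure state, hence a pairing state by the discussion following \thref{thm:Negl1}. For $p=0$, $\rho=\id/(d_\rmA d_\rmB)$ is diagonal in the product basis with $\rho^{\rmT_\rmA}=\rho\geq0$, so $\caN(\rho)=C_{l_1}(\rho)=0$ and $\rho$ is a canonical pairing state. It then remains to show that if $\ket{\psi}$ is entangled and $\rho$ is a pairing state, then $p\in\{0,1\}$.

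Suppose $\rho$ is a pairing state, so that $\caN(\rho)=C_{l_1}(U\rho U^\dagger)$ for some local unitary $U=U_\rmA\otimes U_\rmB$. Since $\caN$ is invariant under local unitaries, $\sigma:=U\rho U^\dagger$ satisfies $\caN(\sigma)=C_{l_1}(\sigma)$, and by \thref{thm:Negl1} it is a canonical pairing state: $\sigma^{\rmT_\rmA}$ has the form of \eref{eq:Negl1Equal} with $\pi$ a product of disjoint transpositions. Because the maximally mixed state is unitarily invariant, $\sigma=p\proj{\phi}+(1-p)\id/(d_\rmA d_\rmB)$ with $\ket{\phi}:=U\ket{\psi}$, and $\ket{\phi}$ is still entangled as local unitaries preserve the Schmidt rank. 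I would then examine the diagonal of $\sigma^{\rmT_\rmA}$. Partial transposition leaves diagonal entries unchanged, so for every product basis label $jk$,
\begin{equation}
\<jk|\sigma^{\rmT_\rmA}|jk\>=\<jk|\sigma|jk\>=p|\<jk|\phi\>|^2+\frac{1-p}{d_\rmA d_\rmB},
\end{equation}
which is strictly positive whenever $p<1$. On the other hand, in the representation \eref{eq:Negl1Equal} a label $jk$ contributes a diagonal term $a_{jk}\proj{jk}$ only when it is a fixed point of $\pi$; if $jk$ participates in one of the transpositions of $\pi$, then $\<jk|\sigma^{\rmT_\rmA}|jk\>=0$.

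Comparing the two computations, $p<1$ forces $\pi$ to have no transpositions, i.e.\ $\pi$ is the identity, so $\sigma^{\rmT_\rmA}$, and hence $\sigma$, is diagonal in the product basis. Subtracting $(1-p)\id/(d_\rmA d_\rmB)$ shows that $p\proj{\phi}$ is diagonal; if moreover $p>0$ then $\proj{\phi}$ is a diagonal rank-one projector, so $\ket{\phi}$ equals some $\ket{jk}$ up to a phase and is a product state, contradicting the entanglement of $\ket{\phi}$. Hence $p=0$ or $p=1$. The argument is short and I do not anticipate a serious obstacle; the one point that needs care is the reduction to a canonical pairing state, which uses both the local-unitary invariance of $\caN$ and the identity $U\id U^\dagger=\id$, so that the mixture structure of $\rho$ survives the conjugation. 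The crux is simply that the strictly positive diagonal of $\rho$, forced by the weight $(1-p)/(d_\rmA d_\rmB)$, is incompatible with the vanishing diagonal entries that \thref{thm:Negl1} imposes at the basis labels taking part in a pairing.
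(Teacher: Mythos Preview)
Your proof is correct but follows a genuinely different route from the paper's. The paper disposes of $0<p<1$ by a direct inequality chain: $\caN(\rho)<p\,\caN(\proj{\psi})\leq p\,C_{l_1}(\proj{\psi})=C_{l_1}(\rho)$, where the strict inequality comes from the fact that adding $(1-p)\id/(d_\rmA d_\rmB)$ to $p\proj{\psi}^{\rmT_\rmA}$ shifts every eigenvalue upward and hence strictly decreases the sum of absolute values of the negative ones; since $\caN$ is local-unitary invariant and the identity contributes no off-diagonal entries in any product basis, the same chain holds after any local unitary, so $\rho$ cannot be a pairing state. You instead use the structural characterization of \thref{thm:Negl1}: in the form \eref{eq:Negl1Equal} any label participating in a transposition of $\pi$ carries zero diagonal weight in $\sigma^{\rmT_\rmA}$, which clashes with the strictly positive diagonal forced by the $(1-p)\id/(d_\rmA d_\rmB)$ term. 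The paper's argument is shorter once one accepts the eigenvalue-shifting step; your argument avoids that computation altogether and, as a bonus, isolates a clean structural fact---a canonical pairing state with all diagonal entries nonzero must already be diagonal (hence incoherent)---which is somewhat hidden in the paper's approach.
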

\begin{proof}
	When $p=0$, $\rho$ is the completely mixed state and a separable pairing state.  When $p=1$, $\rho$ is a bipartite pure state, which saturates the inequality $C_{l_1}(\rho)\leq \caN(\rho)$ in the Schmidt basis, so  $\rho$ is also a pairing state.
	
	When $0<p<1$, we have
	\begin{equation}
		\caN(\rho)<p\caN(|\psi\>\<\psi|)\leq pC_{l_1}(|\psi\>\<\psi|)=C_{l_1}(\rho). 
	\end{equation}
	Therefore, $\rho$ cannot be a pairing state. 
\end{proof}

\section{Lower bounds for the distillable entanglement of pairing states\label{app:DistillLB}}
Here we provide a family of lower bounds for the distillable entanglement of canonical pairing states.  

Suppose $\rho$ is a canonical  pairing state that has the form of \eref{eq:Negl1EqualSupp}. 
Let $
A_1, A_2,\ldots, A_k$ be disjoint two-element subsets of $\{0,1,\ldots, d_\rmA-1\}$ and  
\begin{equation}
	P_j:=\biggl(\sum_{m\in A_j} |m\>\<m|\biggr)\otimes \id_\rmB,\quad j=1, 2, \ldots, k.
\end{equation}
Then $P_1, P_2, \ldots, P_k$ define a (possibly incomplete) local projective measurement. 
Let 
\begin{equation}
	\tilde{\rho}_j :=\frac{P_j\rho P_j}{p_j},\quad p_j=\tr(P_j\rho P_j). 
\end{equation}
Then each $\tilde{\rho}_j$ with $p_j>0$ is a $2\times d_\rmB$ pairing state. 
If  $\rho$ is   entangled, then it is possible to choose at least one subset, say $A_1$, such that $\tilde{\rho}_1$ is entangled.

Suppose  $
A_1, A_2, \ldots, A_k$ are chosen such that all states $\tilde{\rho}_1, \tilde{\rho}_2 \ldots, \tilde{\rho}_k$ are entangled.
Then  the distillable entanglement of $\rho$ can be lower bounded as follows,
\begin{align}
	E_\rmD(\rho)&\geq \sum_{j=1}^k p_j E_\rmD(\tilde{\rho}_j)=\sum_{j=1}^k p_j E_\rmr(\tilde{\rho}_j)\nonumber\\
	&=\sum_{j=1}^k p_j \bigl[S(\tilde{\rho}_j^{\diag})-S(\tilde{\rho}_j)\bigr],
\end{align}
where the inequality follows from the monotonicity of $E_\rmD$ under LOCC, and the two equalities follow from 
\esref{eq:DEpair} and \eqref{eq:REEpair}, respectively.

\nocite{apsrev41Control}
\bibliographystyle{apsrev4-1}
\bibliography{all_references}


\end{document}